\documentclass[12pt]{amsart}
\usepackage{latexsym,fancyhdr,amssymb,color,amsmath,amsthm,graphicx,listings,comment}
\usepackage[section]{placeins}
\pagestyle{fancy}
\newtheorem{thm}{Theorem}
\newtheorem{propo}{Proposition}

\newtheorem{coro}{Corollary}
\setlength{\parindent}{0cm}
\let\paragraph\subsection
\setlength{\parindent}{0cm} \setlength{\topmargin}{-1.0cm} 
\setlength{\headheight}{0.5cm} \setlength{\textheight}{23cm}
\setlength{\oddsidemargin}{0cm} \setlength{\evensidemargin}{0.0cm} 
\setlength{\textwidth}{17.0cm}

\fancyhead{}
\fancyhead[LO]{\fontsize{9}{9} \selectfont OLIVER KNILL}
\fancyhead[LE]{\fontsize{9}{9} \selectfont FUSION INEQUALITY}


\title{Cohomology of open sets}

\author{Oliver Knill}
\date{Mai 21, 2023}
\address{Department of Mathematics \\ Harvard University \\ Cambridge, MA, 02138 }
\subjclass{}

\keywords{Cohomology, simplicial complex, Mayer-Vietoris}

\begin{document}
\maketitle

\begin{abstract}
If $G$ is a finite abstract simplicial complex and $K \subset G$ is a subcomplex and
$U=G \setminus K$ is the open complement of $K$ in $G$, the Betti vectors of $K$ and $U$ 
and $G$ satisfy $b(G) \leq b(K)+b(U)$.
\end{abstract}

\section{The inequality}

\paragraph{}
For a {\bf finite abstract simplicial complex} $G$ with $n$ elements, the {\bf exterior derivative}
$d$ is a $n \times n$ matrix, fixed once $G$ and each $x \in G$ are ordered.
The set of non-empty sets $G$ carries a finite non-Hausdorff {\bf Alexandroff topology} 
$\mathcal{O}$. Sub simplicial complexes are the closed sets and
{\bf stars} $U(x) = \{ y, x \subset y\}$ are the smallest open sets. 
A set $x$ of cardinality $p+1$ is a {\bf $p$-simplex}. The set of {\bf forms} $G \to \mathbb{R}$, 
identifies with $\mathbb{R}^n$. Functions on $p$-simplices are {\bf p-form},
$d$ maps $p$-forms to $(p+1)$-forms and the transpose $d_p^*$ maps $(p+1)$-forms 
to $p$-forms. For example, $d_0$ is the gradient, $d_1$ the curl, 
$d_0^*$ the divergence and $d_0^* d_0$ the {\bf Kirchhoff matrix}.

\paragraph{}
The {\bf cohomologies} $H(K),H(U)$ are defined Hodge theoretically:
for the open $U$ and closed $K=G \setminus U$, there are $|U| \times |U|$ or $|K| \times |K|$ 
matrices $d_U$ or $d_K$, so that $d_U \oplus d_K$ is a $n \times n$ matrix again. Define the
{\bf Dirac matrix} $D=d+d^*$ and a {\bf Hodge Laplacian} $L=D^2=(d+d^*)^2=d d^* + d^*d$. 
It is block diagonal $L=L_0 \oplus L_1 \oplus \cdots \oplus L_q$ and the {\bf Betti vector}
$b = (b_0,b_1, \dots, b_q)$ with {\bf Betti numbers} 
$b_p = {\rm dim}({\rm ker}(L_p))$, where $q$ is the {\bf maximal dimension} of $G$. 
The space of {\bf harmonic $p$-forms} ${\rm ker}(L_p)$ is the {\bf $p$-th cohomology}.
$H(U)={\rm ker}(L_p(U))$ can naturally be identified with the {\bf reduced relative cohomology} $H(G,K)$
and $H(K)={\rm ker}(L_p(K))$ with $H(G,U)$. {\bf Excision} $H(U)=H(G\setminus V,K \setminus V)=H(G,K)$ holds.

\paragraph{}
The blocks $L_p(U)$ for open sets $U$ can also be $0 \times 0$
matrices. It happens if $U$ has no $p$-simplices. For the open $U = \{x\}$ for example, 
where $\{x\}$ is a maximal $q$-simplex in $G$, the matrix 
$L_q(U)$ is a $1 \times 1$ matrix, while the other other blocks $L_p(U)$ are all $0 \times 0$ matrices
so that $b_q(U)=1$ and all other $b_p(U)=0$. This is the cohomology of every open $q$-ball. 
For a non-negative integer vector $b$, there is a complex $G$ and an open $U \subset G$
so that $b(U)=b$. Also any $f$-vector $f=(f_0,f_1,\dots)$, with $f_k$ counting $k$-simplices
can be realized for open sets. For closed $K$, this is not true due to 
{\bf Kruskal-Katona} constraints. 

\paragraph{}
If $b(U),b(K)$ and $b(G)$ are the Betti vectors of $U,K$ and $G$,
the {\bf Euler-Poincar\'e formula} $\chi = \sum_{p=0}^q (-1)^p f_p = \sum_{p=0}^q (-1)^p b_p$
for {\bf Euler characteristic} $\chi$ holds both for open and closed sets:
apply the {\bf heat flow} $e^{-t L}$, using the {\bf McKean-Singer symmetry} 
stating that ${\rm str}(e^{-t L})$ is independent of $t$.
So, ${\rm str}(e^{-t \cdot L})= \sum_{p=0}^q (-1)^p f_p$ for small $t \geq 0$ 
and ${\rm str}(e^{-t L}) = \sum_{p=0}^q (-1) b_p$ for large $t \geq 0$. Simple counting gives $f(G)=f(K)+f(U)$. 
For cohomology however:

\begin{thm}[Fusion inequality]
\label{1}
$b(G) \leq b(K)+b(U)$. 
\end{thm}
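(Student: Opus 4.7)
The inequality is a coordinate-wise bound $b_p(G) \leq b_p(U) + b_p(K)$ for each $p$, and the plan is to derive it from the long exact cohomology sequence associated to the short exact sequence of cochain complexes
$$0 \longrightarrow \Omega^*(U) \stackrel{\iota}{\longrightarrow} \Omega^*(G) \stackrel{r}{\longrightarrow} \Omega^*(K) \longrightarrow 0,$$
where $r$ is restriction to $K$ and $\iota$ is extension by zero from $U$ to $G$.

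First I would verify that this is a short exact sequence of cochain complexes. Because $K$ is a subcomplex, every face of a simplex of $K$ lies in $K$, so restriction to $K$ commutes with the coboundary: $(d_G\omega)|_K = d_K(\omega|_K)$. The same face-closure property shows that if $\omega$ vanishes on $K$ then so does $d_G\omega$, hence the space of forms supported on $U$ is preserved by $d_G$, and the induced differential there is exactly the matrix $d_U$ of the excerpt. Injectivity of $\iota$, surjectivity of $r$, and $\ker r = \operatorname{im}\iota$ are then immediate.

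Next, I would apply the standard zig-zag/snake lemma, together with the Hodge identification $H^p(\,\cdot\,) = \ker L_p(\,\cdot\,)$ stated in the excerpt, to obtain the long exact sequence
$$\cdots \longrightarrow H^p(U) \stackrel{\iota^*}{\longrightarrow} H^p(G) \stackrel{r^*}{\longrightarrow} H^p(K) \stackrel{\delta}{\longrightarrow} H^{p+1}(U) \longrightarrow \cdots.$$
Exactness at $H^p(G)$ gives $\ker r^* = \operatorname{im}\iota^*$, so by rank-nullity
$$b_p(G) = \dim\ker r^* + \dim\operatorname{im} r^* = \dim\operatorname{im}\iota^* + \dim\operatorname{im} r^* \leq b_p(U) + b_p(K),$$
which is the fusion inequality in degree $p$.

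The only real point to pin down is that the Hodge-theoretic differentials $d_U$ and $d_K$ used to define $H(U)$ and $H(K)$ in the excerpt really are the induced differentials on $\Omega^*(U)$ and $\Omega^*(K)$ appearing in the short exact sequence, so that the three cohomologies in the long exact sequence agree with the harmonic-form cohomologies in the statement. Once this identification is in hand, the fusion inequality is a purely formal consequence of exactness and rank-nullity; the combinatorial geometry of $G$ enters only through the compatibility between the Hodge and cochain pictures.
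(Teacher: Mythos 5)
Your argument is correct, but it takes a genuinely different route from the paper. The paper's proof is spectral: using the Courant--Fischer minimax formula together with the monotonicity $\lambda_k(L_K)\le\lambda_k(L_G)$ and $\lambda_k(L_U)\le\lambda_k(L_G)$ from earlier work, it shows $\lambda_k(L_K\oplus L_U)\le 2\lambda_k(L_G)$ for every $k$ (blockwise in $p$), and then reads off that $L_G$ cannot have a larger kernel than $L_K\oplus L_U$. You instead observe that restriction to the subcomplex $K$ and extension by zero from the open set $U$ give a short exact sequence of cochain complexes $0\to C^*(U)\to C^*(G)\to C^*(K)\to 0$; your verifications go through because $K$ is down-closed and $U$ is up-closed, the induced differentials are exactly the submatrices $d_U$ and $d_K$ of the paper, and finite-dimensional Hodge theory (valid since $d_U^2=0$ and $d_K^2=0$) identifies the three cochain cohomologies with the harmonic spaces in the statement --- so exactness plus rank--nullity finishes the proof. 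Each route buys something. The spectral proof yields a quantitative comparison of the whole spectra of $L_{K,U}$ and $L_G$, not merely of their kernels, which is the paper's main analytic content. Your long exact sequence yields more topological information: combining the three exactness conditions gives $b_p(K)+b_p(U)-b_p(G)=\mathrm{rank}(\delta_{p-1})+\mathrm{rank}(\delta_p)$ for the connecting maps $\delta_p\colon H^p(K)\to H^{p+1}(U)$, i.e.\ $b(I)=\sum_p \mathrm{rank}(\delta_p)\,(e_p+e_{p+1})$, which in fact settles the paper's later, explicitly ``still unproven,'' observation that the interface cohomology $b(I)$ is a sum of adjacent pairs.
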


\begin{proof} 
Like $L_G$, also the direct sum $L_{U,K}=L_U \oplus L_K$ of the disjoint union $K \cup U$ is a $n \times n$ matrix.
We will show that the eigenvalues satisfy $\lambda_k(L_{K,U}) \leq 2 \lambda_k(L_G)$. 
This implies that $L_G$ can not have more zero eigenvalues 
than $L_{K,U}$. When restricting this argument to $p$-forms, 
the $p$-form Laplacian  $L_{G,p}$ can not have more zero eigenvalues than $L_{K,U,p}$,
proving that $b_p(G) \leq b_p(K) + b_p(U)$, which is the statement needing verification. \\
We know already $\lambda_k(L_U) \leq \lambda_k(L_G)$ and 
$\lambda_k(L_K) \leq \lambda_k(L_G)$ \cite{Eigenvaluebounds}. 
Due to block diagonal nature of $L_{K,U} = L_K \oplus L_U$ and $D_{K,U} = D_K \oplus D_U$,
the two matrices $D_K,D_U$ commute. The {\bf Courant-Fisher formula} for the eigenvalues with 
$\mathcal{S}_k(A) = \{ \{ |v|=1 \} \subset V, V \in \mathcal{G}_k(A) \}$ being the {\bf unit spheres}
of the points in the {\bf Grassmannian} $\mathcal{G}_k(A) = \{ V \subset \mathcal{R}^{|A|}, {\rm dim}(V)=k \}$
$$ \lambda_k(L_{K,U}) = \underset{j \geq 1}{{\rm min}}  \; \; 
                        \underset{V \in \mathcal{S}_j(U)}    {{\rm min}} \; \; 
                        \underset{W \in \mathcal{S}_{k-j}(K)}{{\rm min}}  \; \; 
                        \underset{v\in V, w \in W}{{\rm max}} (\lvert D_U v\rvert^2 + \lvert D_K w\rvert^2) \; . $$
This is smaller or equal than the sum of 
$\lambda_j(L_{K}) +\lambda_{k-j}(L_U)$ and so smaller or equal than $\lambda_k(L_K) + \lambda_k(L_U)$ as $j \leq k$ implies 
$\lambda_j \leq \lambda_k$ and $k-j \leq k$ implies $\lambda_{k-j} \leq \lambda_k$. 
Both are estimated from above by $\lambda_k(L_{G}) = \underset{V \in \mathcal{S}_k(U)}{{\rm min}} \underset{v\in V}{{\rm max}} \lvert D_G v \rvert^2$.
Adding up the two estimates gives the result. 
\end{proof}

\paragraph{}
This proof was based on three pillars: \\
{\bf 1)} The cohomology is a spectral property as it was defined as such. In the case
of closed sets, this was equivalent to the simplicial cohomology and is classical 
(as has been noted by \cite{Eckmann1944}). In the case of open sets, this appears to be 
a new definition. \\
{\bf 2)} We use spectral monotonicity with respect to the lattice of closed subsets $K$. This was
proven earlier \cite{HodgeInequality} and was based on Courant-Fischer formula.
An appendix provides code, allowing an interested reader to compute all
objects, vectors, matrices and numbers discussed here for an arbitrary simplicial complex $G$.\\
{\bf 3)} If a symmetric matrix $L=A \oplus B$ decays into two block diagonal matrices 
$A,B$, then the left padded increasing order spectra satisfy 
$\lambda_k(A) + \lambda_k(B) \leq 2 \lambda_k(A \oplus B)$. 
Some elementary properties of a {\bf partial order on finite sequences} or 
the {\bf spectral partial order on matrices} in an appendix.

\paragraph{}
Computing the cohomology of open sets in the same time also determines the {\bf relative 
cohomology} $H(U)=H(G,K)$ which in turn is the {\bf reduced cohomology} $H(G/K)$ of the quotient $G/K$. 
The later is not a simplicial complex in general but it can be identified with 
a $\Delta$-set. Any result which can be formulated for open sets like Gauss-Bonnet, 
Poincar\'e-Hopf or the already mentioned Euler-Poincar\'e or 
McKean-Singer results can be reinterpreted as results for $\Delta$-sets and so in particular 
for simplicial sets which are a subclass of $\Delta$ sets with more
additional structure. Every $\Delta$-set can be obtained by applying 
operations $G \to G/K$ and taking as the topology on $X=G/K$
the open sets in $U$, together with $X=U \cup \{x\}$, where $x$ represents the 
equivalence class of $K$. If $(G,\mathcal{O})$ was a connected 
topological space, then $(X=G/K=U \cup \{x\},\mathcal{O} \cup \{X\})$ is connected. 
The only new open set is $X$. The only new closed set is $\{x\}$. In other words,
$X$ is the {\bf one-point compactification} $\dot{U}$ of $U$. We could keep it a simplicial 
complex by making a {\bf cone extension} over the boundary of $U$ in $G$; it is much cheaper
to just compute the cohomology of $U$ however. The spectral properties of $L_{\dot{U}}$ via
cone would of no more allow a direct comparison of the spectrum of the extension with the 
spectrum of $L_G$.The closure $\overline{U}$ of $U$ would be a subcomplex of $G$ but the cohomology of 
$\overline{U}$ and $\dot{U}$ are not related in general as the case $G=\overline{U},
K=\delta{U}$ shows. By the way, experiments indicate that there are much more sets
without interior than closed sets that are closures of open sets. On the complete
complex $G=K_{q+1}$ (which can be seen as the smallest closed $q$ dimensional ball) 
for example, every open set in $G$ is dense in $G$, illustrating the non-Hausdorff property. 

\paragraph{}
Also {\bf branched coverings} can be dealt with more elegantly when including open sets because one can 
separate the closed {\bf ramification set} from the rest, which is an open set. An example is a
wedge of of $m$ circles which can be seen as a branched cover with a single ramification locus. 
If a group $A$ acts on a simplicial complex $G$ and $K$ is the orbit of a single point, then
the equivalence classes $G/A$ of orbits can be seen as $G/K$. Its {\bf reduced cohomology} $H(G,K)$ is 
defined as the cohomology of the open set $U=G \setminus K$. In the case where $G$ was a circle and $K$ a set of 
$m$ different points, the set $U$ is a cover of the {\bf fundamental domain} $V$ and consists of 
$m$ disjoint unions of open sets. Of course, $b(V)=(0,1)$ and $b(U)=(0,m)$ and the cohomology of
the sphere bouquet is $b(G) = b(K) + b(U) - b(I) = (m,m) - (m-1,m-1) = (1,1)$. In general, for any 
complex $G$, if $K$ is a $0$-dimensional subcomplex consisting of $m$ points, $b(G)=b(U)+b(K)-b(I)$ with 
$b(()=(m-1,m-1,0,\dots)$. An alternative {\bf Riemann-Hurwitz picture} for a group $A$ acting on a complex $G$ 
is to look at the {\bf fixed point set} $K$ (the ramification points in a branched cover setting) separately. 
Let $G$ be the bouquet of $m$ circles and $A$ the cyclic group operating petals. Let $K$ be the fixed point set. 
Now $U=G \setminus K$ is an open set with reduced cohomology $b(U) = (0,m)$ and Euler characteristic $-m$. 
We see $G$ as the 1-point compactification of $U=G \setminus K$ so that $b(G) = b(U) + b(K) = (0,m) + (1,0) = (1,m)$. 

\paragraph{}
This concludes the article. The rest is less condensed, contains more examples,
remarks and illustrations. We see a {\bf set of $n$ sets} $G$ with a {\bf dimension function} ${\rm dim}(x)$
monotonone in $|x|$ and for which the $n \times n$ matrix $D=d+d^*$ has a nilpotent $d$ as an effective
{\bf data structure} that represents elements in the {\bf elementary topos}
$\mathcal{G}$ of {\bf finite $\Delta$-sets}. The later is a {\bf functor category} and a {\bf presheaf} over
a simplex category: there is not only a notion of addition = {\bf coproduct}, the disjoint union, but
also a {\bf Cartesian product} and the possibility to look at equivalence classes $G/K$ for a
subobject $K$. The zero element $G=\{ \}$ is the {\bf initial object} and $G=\{1\}$ the 
{\bf terminal object}. The {\bf topos of finite sets} is not powerful enough for
{\bf calculus} and {\bf geometry}, $\mathcal{G}$ does the job. One has a multi-variable calculus in arbitrary
dimensions, there is a cohomology and Betti vectors $b$ which can be encoded in the Poincar\'e polynomial 
$b_G(t) = b_0 +b_1 t + \dots + b_q t^q$. The map $G \to b_G$ becomes a ring homomorphism if 
$\mathcal{G}$ is extended to a commutative ring $\mathcal{Z}$, which is a topos and presheaf.
K\"unneth follows by definition because the $p+q$ form $g(x) h(y)$ is a harmonic 
function in the product $G \times H$ if $g$ was harmonic in $G$ and $h$ was harmonic in $H$. The difficulty
which already Whitney battled with in 1950 that if the $p$-form $g$ is a function of $p+1$ variables and
the $q$ form $h$ is a function of $q+1$ variables, then $g(x) h(y)$ is a function of $p+q+2$ variables is 
taken care off because in $G \times H$, the $p+q$ forms are given as such because the dimension functional has shifted.
The same story worked already on the category of finite simple graphs which 
extends to a ring using the Shannon product and where the same K\"unneth relations assure that we have
a Betti functor from the ring of graphs to the ring of polynomials. While graphs are more intuitive there are
disadvantages: (i) computations become computationally heavier, (ii) we can not form $G/K$ with a subgraph $K$
in general, (iii) the complement $G \setminus K$ for a subobject $K$ is not in the class;
(iv) {\bf relative cohomology} are not convenient; and (v), the product of two manifolds
is not in general not a manifold. The $4$-manifold
$G=S^2 \times S^2$ can be represented in $\mathcal{G}$ as a set 
$G=\{\{1,5\},\{1,6,7,8\},\{2,3,4,5\},\{2,3,4,6,7,8\}\}$. Both the $f$-vector and
Betti vector are $(1,0,2,0,1)$. As the Kruskal-Kotona conditions illustrate, this
is far from a simplicial complex. The dimension functional has become important.
While in general, $D$ is already determined from $G$, we need the dimension function
to represent a topos element. 

\paragraph{}
The notion of manifold for simplicial complexes could be extended too from simplicial complexes to $\Delta$ set.
Every topos element $G$ comes with a finite topological space $\mathcal{O}$
generated by {\bf atoms} $U(x)$, smallest open sets in $\mathcal{O}$. The {\bf unit sphere} is 
$S(x)=\overline{U(x)} \setminus U(x)$, the {\bf unit ball} is $B(x) = \overline{U(x)}$. 
A {\bf $d$-manifold} is a space for which every unit sphere is a $(d-1)$-manifold that is a $(d-1)$ sphere and a $d$-sphere
is a $d$-manifold $G$ which when punctured $G\setminus U(x)$ becomes contractible. 
\footnote{Contractible is here understood as collapsible. We say {\bf homotopic to 1} 
for the wider equivalence relation, where one can do contraction and extension steps. Deciding
about homotopy equivalence is in other complexity class (provided $P \neq NP$) than deciding about contractibility.}
The topology on the 
$d$ sphere $G=\{ \{0\}, \{1,2,3, \dots, d\} \}$ for example consists of the open sets in $\{1, \dots, d\}$
(seen as a topological space itself when closing it). Now by definition, the unit sphere of $\{0\}$ is 
$\overline{U( \{0\})}  \setminus U( \{0\} )$ which is a $(d-1)$-sphere. 
If $M \subset G$ is a sub-object, the {\bf interior points} ${\rm int}(M) = \{ x \in M, U(x) \subset M \}$ 
(where $U(x)$  refers to the smallest open neighborhood of $x$ in $G$),
define the {\bf boundary} $\delta M =  M \setminus {\rm int}(M)$. A function $f: M \to \mathbb{R}$
a {\bf form}, is simply a vector indexed by the finite set $M$. 
Define the {\bf integral} $\int_M f dx = \sum_{x \in M} f(x)$. 
If the sets in $M$ can be oriented in a compatible way so that
if $x,y \in M$ and $x \subset y$ the orientations of $x,y$ match,
{\bf Stokes theorem} is $\int_M df dx  \int_{\delta M} f dx$ and if $M$ has no boundary, 
$\int_M df dx =0$. Traditionally, Stokes theorem is treated in the language of {\bf chains},
elements in the free Abelian group generated by the elements in $M$. On a single simplex just being the definition
of exterior derivative, it goes over to general chains by linearity. On orientable manifolds with 
boundary, the orientation produces cancellations of $df$ in the interior. Discrete calculus notions
have started more than 150 years ago \cite{Kirc}.

\section{Examples} 

\paragraph{}
An example, where Theorem~(1) is equality is if $U$ is an {\bf open $q$-ball} and $K$ a {\bf closed $q$-ball}.
If fused it becomes a $q$-sphere $G$. Then $b(U)=(0,\dots, 1), b(K)=(1,0,\dots,0)$ and $b(G)=(1,0,\dots,0,1)$.  
An example with inequality is if $U$ is an open $q$-ball and $K$ is a $(q-1)$-sphere
and $G=U \cup K$ is a closed $q$-ball. Then $b(U)=(0,0, \dots,0,1)$, $b(K)=(1,\dots, 1,0)$ and 
$b(G)=(1,0,\dots,0,0)$ so that the interface cohomology is $b(I) = b(U)+b(K)-b(G)=(0,0, \dots,0,1,1)$. 
IN the case $q=1$, we can implement this as $U=\{ \{1,2\} \}$, $K=\{ \{2\}, \{2,3\}, \{3,4\}, \{4\},\{4,1\}, \{1\} \}$
with $G=U \cup K$ being a 1-sphere. In this case $b(U)=(0,1), b(K)=(1,0), b(G)=(1,1)$.  

\paragraph{}
The {\bf comma space} $\{ \{1\}, \{1,2\} \}$, which can be seen as an open star in its closure
$\{ \{1\},\{1,2\},\{2\} \}$ we have $d=\left[ \begin{array}{cc} 0 & 0 \\ -1 & 0 \end{array} \right]$
and $L=1_2$. It is important that we do not need to know the ambient $G$ to get $d$. The object
$U$ itself is now considered a geometric object. The orientation (a choice of coordinates) only 
affects the entries of $D$ and not spectral properties. For
$\{ \{2\}, \{1,2\} \}$ for example, we would have 
$d=\left[ \begin{array}{cc} 0 & 0 \\  1 & 0 \end{array} \right]$.
The Hodge blocks are $L_0=L_1=[1]$. There are no harmonic forms in this
case. The Betti vector is $(0,0)$. Open sets with {\bf trivial cohomology}
are interesting when doing {\bf homotopy deformations}. If we add such an open set $U$ to 
a closed set $H$ to get a larger closed set $G$, there can be no exchange of cohomology 
and we must have additivity. The Fusion inequality must be an equality:
$b(H) + b(K) = b(G)$. Homotopy deformations do not change the cohomology. 
The same holds for any for any complete $q$-simplex in which one of $q-1$ dimensional
clsed face has been taken off, like $U(\{1\}) = \{ \{1 \},\{1,2\},\{1,3\},\{1,2,3\} \}$. In
this two-dimensional case, the Dirac matrix is $\left[ \begin{array}{cccc} 0 & -1 & -1 & 0 \\
                  -1 & 0 & 0 & 1 \\ -1 & 0 & 0 & -1 \\ 0 & 1 & -1 & 0 \\ \end{array} \right]$,
a matrix of determinant $4$ so that the cohomology is $b(U)=(0,0,0)$. The Laplacian $L=D^2$
is $2 I_4$. 

\paragraph{}
For a general set of sets $A$, like $A=\{\{1\},\{1,2\},\{1,2,3\}\}$
which need neither to be open or closed (and assuming that ${\rm dim}(x)=|x|-1$,  
the conditions $d^2=0$ already fail. We get a Hodge Laplacian that is no more block diagonal. 
The condition $d^2=0$ in $D=d+d^*$ is related to the axioms of the face maps if the object
is condered a $\Delta$-set. While Euler-Poincar\'e works in any situation where $d^2=0$, 
Euler-Poincar\'e fails in the example $A$ as $f_0=f_1=f_2=1$ but $b_0=b_1=b_2=0$.  
Euler-Poincar\'e would work for the closure 
$\overline{A}=\{\{1\},\{2\},\{1,2\},\{2,3\},\{1,3\},\{1,2,3\}\}$ or
the smallest open set $B=\{\{1\},\{1,2\},\{1,3\},\{1,2,3\} \}$
containing $A$, where $f_0-f_1+f_2=0$ and $b_0=b_1=b_2=0$.
Open sets are examples, where one still is in a $\Delta$-set situation, but which are no
more simplicial complexes. Also Cartesian products are only $\Delta$-sets. A good test
whether we deal with a $\Delta$-set is to see whether the exterior derivative 
$d$ defined from the set of sets satisfies $d^2=0$. In general, we have either implicitly 
like for open or closed sets in a simplicial complex also specify the {\bf dimension function}. 
Without saying otherwise, it is ${\rm dim}(x)=|x|-1$, the cardinality of $x$ minus $1$.  

\paragraph{}
Like in Mayer-Vietoris, we can use the decomposition to compute Betti vectors. A classical situation is the
{\bf connected sum construction} in which two complexes are identified along a closed ball $B$
removing the open interior of $B$ so that the complexes are glued along spheres. This does not need to be
a manifold situation. In general, if $M,H$ are two complexes and $x \in M, y \in H$ have isomorphic 
unit spheres $S(x),S(y)$, where $S(x)=B(x) \setminus U(x)$, with unit ball
$B(x)=\overline{U(x)}$ the closure of the smallest open set $U(x)$ containing $x$, then
we can look at the disjoint union $M' \cup H' = M \setminus U(x) \cup H \setminus U(y)$ and identify
them along $S(x) \sim S(y)$. This produces a new complex $G$. Mayer-Vietoris interprets this as 
$G=M' \cup H'$ with intersection $M' \cap H'$ which are all simplicial complexes. 
The new picture is to see $G$ as a union of a complementary pair of a {\bf closed}
$K=M \setminus U(x)$ and an {\bf open} $U=H = H \setminus B(x)$. 
If $b(K)$ and $b(U)$ are known, then $b(G) \leq b(K) + b(U)$. 
An interesting question is to give necessary or sufficient conditions under which conditions 
we have equality. The case when $b=0$ is an example where we have equality. 

\paragraph{}
Let us assume that $M,H$ are {\bf orientable $q$-manifolds}. {\bf Orientable} means that there is exists a 
non-zero $q$-form on $M$. The term {\bf q-manifold} means that every unit sphere 
$S(x)=\overline{U(x)} \setminus U(x)$ in $G$ is a $(q-1)$-sphere. Also inductively defined is the notion
of sphere. A {\bf $q$-sphere} $S$ is a $q$-manifold for which a sub-complex $A=S \setminus U(x)$ is
contractible. Contractible $A$ means that there is $y \in A$ such that both $S(y)$ and 
$G \setminus S(x)$ are contractible in $A$. Denote by $e_1, \dots e_q$ the standard basis vectors 
in the linear space $\mathbb{R}^q$. Mayer-Vietoris sees this as $b(M')=b(M) - e_q$ and $b(H')=b(H) -e_q$ 
and $b(G)=b(M')+b(H')-b(S')$, where $S'$ is the suspension of $S=M' \cap H'$ so that 
$b(G)=(1,b_1(M)+b_1(H),b_2(M)+b_2(H),\dots,b_{q-1}(M)+b_{q-1}(H),1)$.
Theorem~\ref{1} is here an equality. 

\paragraph{}
For example, if $M,H$ are both $2$-tori, then 
$b(M)=b(H)=(1,2,1)$ and while Mayer-Vietoris sees this as $b(M \# H) = (0,2,0) + (0,2,0) + (1,0,1) = (1,4,1)$ 
for the {\bf genus $2$ surface}, we see it as $b(K \cup U) = (1,2,0) + (0,2,1)= (1,4,1)$. 
If $M,H$ are both $2$-spheres, then $b(M)=b(H)=(1,0,1)$. We see it as $(1,0,0) + (0,0,1)=(1,0,1)$. 
More generally, we can add a topologically closed genus $k$ surface $K$ with $b(K)=(1,2k,0)$ and an 
open genus-$l$ surface $(0,2l,1)$ to a surface without boundary with $b(G)=(1,2k+2l,1)$. 

\paragraph{}
In the non-orientable case, we can have examples, where the connected sum operation produces a strict inequality. 
If $M,H$ are both projective planes for example, we see this as then 
$b(N \# H) = b(K \cup U)$, where $K$ is a closed ball and $U$ 
is an {\bf open Moebius strip} so that $(1,0,0) + (0,0,0)= (1,0,0)$. If we would take an open ball $U$ and a 
{\bf closed Moebius strip $K$} then $(0,0,1) + (1,1,0) \geq (1,0,0)$ produces a collision of a harmonic 
$1$ and $2$-form, when merged. 
Such a collision of a $1$-form and a $1$-form happens also if we fuse an open ball $U$ with a circle $K$ we get 
a $2$-ball $G$ and $b(U) + b(K) = (0,0,1) + (1,1,0) \geq (1,0,0) = b(G)$.

\paragraph{}
The {\bf join} $A * B$ of two simplicial complexes $A,B$ (in particular closed sets) 
is defined as the topological closure of the set of sets $\{ a*b, a \in A, b \in B\}$, where 
$a*b$ is the {\bf disjoint union} of two sets $a,b$. 
For example, $A=\{ \{1\},\{2\},\{3\}\}, B=\{\{4\},\{5\},\{6\} \}$ we take the 
closure of $\{ \{1,4\},\{1,5\},\{1,6\}, \{2,4\},\{2,5\},\{2,6\}, \{3,4\},\{3,5\},\{3,6\} \}$ which is the 
Whitney complex of the utility graph, a graph with $6$ vertices and $9$ edges, 
$f$-vector $f(G)=(6,9)$ and Betti vector $b(G)=(1,4)$. When seen in graph theory, the join was an ``addition" 
dual to the disjoint union. Here it closely related to multiplication $A \times B$. Indeed, for open sets
it is very close as for simplices $x * y$ is a simplex of dimension ${\rm dim}(x) + {\rm dim}(y)+1$. 

\paragraph{}
The join operation establishes a {\bf join monoid} structure on the set of
simplicial complexes. When considered for graphs, it is called the {\bf Zykov join} and 
dual to the disjoint union where duality is the {\bf graph complement}. 
This has been imprtant also in the arithmetic of graphs, where the Shannon ring with disjoint
union as addition and Shannon multiplication (the strong product) is an isomorphic dual ring 
to the Sabidussy ring with Zykov join addition and Sabidussy multiplication (large product). 
With the Whitney complex structure and cohomology the ring is compatible with cohomology and 
the map from the ring to the Poincar\'e-Polynomial is a ring homomorphism. 
It is not possible to extend the product to simplicial complexes in an associative way
because multiplication throws us out of the class of simplicial complexes so that we in the past
looked at the Barycentric refined object. But then $1*G=G_1$ is the Barycentric refinement of $G$
and $1*(1*G)=G_2$ the second Barycentric refinement while $(1*1)*G=G_1$ is the first. 
Multiplication throws us out from simplicial complexes into the larger class of $\Delta$ sets. 

\paragraph{}
The {\bf join of two open sets} can be defined  as $A * B = \{ a*b, a \in A, b \in B\}$ bt without 
shifing the dimension. The join of $A=\{ \{1,2\} \}$ and $B= \{ \{3,4\} \}$, both 1-dimensional 
objects is $A*B = \{ 1,2,3,4\}$ which is a $3$ dimensional object. This fits because the 
join of the $1$-point compactification of $A$ (a 1-sphere) and $B$ (an other 1-sphere) is now a 
3-sphere as it should be. The join is automatically open in any complex containing both $A$ 
and $B$ as disjoint sets. If we define for an open set $f_A(t) = \sum_{k=0}^q f_k(A) t^{k+1}$. 
For closed sets, we have $1+f_{A*B} = (1+f_A)(1+f_B)$ for open sets $A,B$, we have $f_{A*B} = f_A f_B$. 
With these definitions, $f_U(t) + f_K(t)  = f_G(t)$ if $G$ is considered closed.
If $A$ is a $k$-sphere and $B$ is a $l$-sphere, then $A+B$ is a $(k+l+1)$-sphere. To understand
the Cartesian product we have to see it as a Cartesian product of open sets then readjust the 
topology to make it closed, which involves shifting the dimensions. 
For example, an open $p$-ball multiplied with an open $q$-ball is an open $(p+q)$ ball with 
cohomology $(0,\dots,0,1)$. The Cartesian product of an open set behaves nicely.

\paragraph{}
The {\bf suspension operation} $G=S(H) = H*S_0$ is the join of $G$ with the
$0$-sphere $S_0=\{ \{1\},\{2\} \}$. In general, for any simplicial complex $G$, 
the Betti vector $b(G)$ is related to $b(H)$ by 
$$   b(S(H)) = T(b(H)-e_1) + e_1 \; ,  $$
where $T$ is the {\bf shift operator} $T(b)_k=b_{k+1}$. In other words, {\bf ``the suspension 
shifts the reduced cohomology"}. This formula is known in topology (see e.g. \cite{Hatcher})
and could be verified with a Mayer-Vietoris 
argument, seeing $S(G)$ as the union $A \cup B$ intersecting in $H$. Both $A,B$ are joins of 
$H$ with a $1$ point complex $1$ and have $b(A)=b(B)=e_1$. Mayer-Vietoris now sees
$b_k(G) = b_{k-1}(H)$ for $k \geq 2$ and $b_1(G)=b_1(H)$. 
We can also see it within $G$ as a disjoint union of $K=B(s_1)$ with $U=G \setminus K$. 
Now $b(K)=(1,0,0,0, \dots, 0)$ and $b(U)=(0,b_1,b_2,\dots,b_d)$. 
This again is a case with additivity $b(K)+b(U)=b(G)$. 

\paragraph{}
A {\bf homotopy extension} $K \to G$ can be understood as a fusion of a simplicial complex 
$K$ with an open set $U$ for which both the $1$-point compactification $\dot{U}$ and 
$\overline{U} \cap K$ are contractible. Since $b(\dot{U})=1 = 1+b(U)$, we must have $b(U)=0$. 
A {\bf homotopy deformation} is given by a finite set of such extensions or reductions, 
the inverse operation.  Mayer-Vietoris sees a homotopy extension as the union of 
$K$ with the ball $\overline{U}$ intersecting in $K \cap \overline{U}$. As
both $K \cap \overline{U}$ and $\overline{U}$ are contractible, the cohomology does not change. 
We reinterpret it again as a situation, where $b(K)+b(U)=b(G)$ and $b(U)=0$ is the zero vector. For example,
if $U = U(x)$ is a smallest open set in $G$ and {\bf the unit sphere} $S(x)=\overline{U(x)} \setminus U(x)$ 
is contractible and then $K=G \setminus U$ is a homotopy deformation of $G$. A simple example
is if $x=\{v\}$ is a vertex in $G$ such that $S(x)=\{y\}$ is a simplex in $G$. In that case,
$U(x)=S(x) + x \setminus S(x)$ is a {\bf cone extension} of the simplex $S(x)$ with $S(x)$ removed.
There is no cohomology on $U(x)$ (the Betti vector is the zero vector) and 
$\chi(U(x))=\chi(B(x)) -\chi(S(x)) = 1-1=0$. This is an example with equality. 

\paragraph{}
A simplicial complex $G$ defines an element $X$ in a polynomial ring with variables
in $V = \bigcup_{x \in G} x$. To do such an encoding  of $G$ in a 
{\bf Stanley-Reisner ring}, represent every $x=(v_1,\dots,v_k) \in G$
as a {\bf monomial} $X(x)=v_1 v_2 \cdots v_k$ so that $G$ is encoded in
$X=\sum_{x \in G} X(x)$. 
Given two complexes $G,H$ represented algebraically
as $X,Y$ define $XY$, label the monoids as $V$, define the graph $(V,E)$, where
$E$ are pairs $(a,b)$ in $V \times V$ such that either $a$ divides $b$ or $b$ 
divides $a$. The Whitney complex of this graph is the 
{\bf geometric product} $G \times H$ of $G$ and $H$. If $H=\{\{1\}\}$ the complex $G H$ is
the {\bf Barycentric refinement} of $G$. If $G$ is a $p$-manifold and $H$ is a $q$-manifold
then $G\times H$ is a $p*q$ manifold. For cohomology, there is the {\bf K\"unneth formula}:
define the {\bf Euler polynomial} $b_G(t) = b_0  + b_1 t + \cdots + b_q t^q$. Then $b_G(-1)$ is the
Euler characteristic. The {\bf K\"unneth formula} is
$b_{G \times H}(t) = b_G(t) b_H(t)$. 

\paragraph{}
The computation via the Stanley-Reisner ring can lead rather quickly to large matrices because in every
multiplication, we force it to become again a simplicial complex (by forming the order complex). 
It is better to see the product as the Cartesian product which is not a simplicial complex 
any more but can be dealt with as a $\Delta$-set. 
For example, $A=B= \{\{1\},\{2\},\{1,2\}\}$, then \begin{tiny}
$A \times B = \{\{1,3\},\{1,4\},\{2,3\},\{2,4\},\{1,2,3\},\{1,2,4\},\{1,3,4\},\{2,3,4\},\{1,2,3,4\}\}$
\end{tiny}
This is not a simplicial complex. The sets of cardinality $2$ are now the vertices. The order complex of
$A \times B$ is a simplicial complex. It follows quickly from the fact that if $f$ is harmonic in $G$
and $g$ is harmonic in $H$, then the {\bf tensor product} $f \times g(x,y) = f(x) g(y)$ is harmonic
in $G \times H$. It is much more convenient to let the product be a $\Delta$-set and not a simplicial complex. 
This allows to work with $|G| \cdot |H|$ sets rather than with the Barycentric refinement which 
has much more elements. Also the {\bf spectral properties} are like in the continuum. The eigenvalues
of $L_{G \times H}$ are  $\lambda_i(G) + \lambda_j(H)$. 

\paragraph{}
The Cartesian product for closed sets when extended to a product for open sets $U,V$.
can be seen within the frame work of $\Delta$-sets, which is an elementary topos and 
so Cartesian closed. When confining to simplicial complexes,
we would have to define $U \times V$ as the {\bf interior} of 
$\overline{U} \times \overline{V}$. But the product of any $\Delta$-set is defined. Indeed,
the category of $\Delta$-sets is a {\bf topos}, having all the nice properties like
products with terminal $1$, and coproducts with initial $0$, 
exponentials, which allows to represent {\bf graphs of functions} 
$\{ (X,Y), F(X) = Y \}$ as elements 
in the category and so define {\bf level surfaces} 
$G= \{X \in G^r, f(X_1, \dots, X_r)=0 \}$ 
in $G^r$ and a notion to decide whether we have a sub-objects or not. 
For now, we can  us {\bf fusion calculus} in the product 
for example, let the $1$-sphere $G$ the fusion of an open interval $U$ and a closed
interval $K$ with $b(G) = b(U) + b(K) = (0,1) + (1,0)$. Then
$b(G \times G) = (b(U) + b(K))*(b(U) + b(K)) = b(U) b(U) + 2 b(U) b(K) + b(K) b(K)
=(0,1)*(0,1) + 2 (0,1)*(1,0) + (1,0)*(1,0) = (0,0,1) + 2 (0,1,0) + (1,0,0) = (1,2,1)$. 
The arithmetic frame-work provides automatically an algebraic
structure on $\Delta$-sets with properties we want to have. 

\paragraph{}
Let us look at the case, where $G$ is a connected simplicial complex and $K=\{ \{v_1\},\dots,\{v_m\}\}$ is
a {\bf zero-dimensional skeleton complex}, consisting of finitely many isolated points. Let us start with $m=1$,
where we just remove one point from $G$ and where $U=G \setminus \{ \{v_1\} \}$. The Betti vector of $U$ 
is now the {\bf reduced Betti vector}. The cohomology of $U$ is the {\bf reduced cohomology}. 
It has now concrete representatives as harmonic forms of finite matrices. Even if $G$ and $K$ shuld be huge
and $U$ is small, then the relative cohomology $H(G,K)$ can be computed fast as it does not care about how
$K$ and $G$ look like.  The fusion inequality assures that 
$b(G) = (1,b_1,\dots,b_q) \leq b(K) + b(U) =(1,0,0,\dots,0) + b(U)$. 
Since the fist coordinate does not change and only the second coordinate could change in $U$,
the preservation of Euler characteristic forces $b(U) = (0,b_1,\dots,b_q)$. But when removing
the second vertex, the property $b_0 \geq 0$ forces $b(K) = (2,0,\dots,0)$ and 
$b(U) = (0,b_1-1,b_2,\dots,b_q)$. 

\paragraph{}
In general, $b(K)=(m,0,\dots,0)$  and $b(U)=(0,b_1-m+1,b_2,\dots,b_q)$. We can think of $U$ as $G/K$ in 
which all the finite points are glued together. This is an {\bf orbifold picture} a manifold
in which finitely many points are identified. Take a 2-sphere $G$ with $b(G)=(1,0,1)$ 
for example and take $K=\{ \{n\},\{p\} \}$ where $p,q$ are different points with $b(K)=(2,0,0)$.
Then $b(U)=b(G \setminus K) = (0,1,1)$. The harmonic 1-form lives on the edges away from the 
boundary and winds around the point of compactification,
the harmonic $2$-form is the volume form. The orbifold is a ``doughnut without hole", classically
given as a variety $(x^2+y^2+z^2)^2=x^2+y^2$. We were able to derive the cohomology of this 
variety from knowing the cohomology of the $2$-sphere and the $2$-point set alone and that as a
connected set $b_0=1$. The Euler characteristic $\chi(K)+\chi(U)=\chi(G)$ and that $G \to U$ did not 
affect the volume, implied that $b_1=1$. 

\paragraph{}
Let $K$ is a closed simple one-dimensional path in $G$.
An example is a {\bf knot} $S^1$ embedded in a $3$-sphere $G=S^3$. How large is $b(U)+b(K)-b(G)$? 
In the case of an $S^1$ embedded in $S^3$, this means to investigate $b(U)+(1,1,0,0) - (1,0,0,1) \geq 0$. 
The fusion inequality implies that $b(U) = (0,a-1,a,1)$ with $a \geq 1$.
We can also decompose $G=S^3$ as $U + K$ where $U$ is the open solid 3-torus with $b(U)=(0,0,1,1)$ 
(If the open 3-torus and the 2-torus $(1,2,1,0)$ are fused we get the closed 3-torus $(1,1,0,0)$ 
$K$ is the knot complement. )
Now, for the trivial knot and where $U$ is the open solid torus and $K$ is the closed solid torus
then $G=S^3 = U + K$ and $b(U)+b(K)=(0,0,1,1) + (1,1,0,0) = (1,1,1,1)$ and $b(G)=(1,0,0,1)$. 
For the trivial knot $b(I)=(0,1,1,0)$. What happens for a general knot. What happens if the ambient
space is changed? What happens in the case of links, unions of knots. 

\paragraph{}
From the 5 platonic solids, only the {\bf octahedron} and the
{\bf icosahedron} are 2-dimensional simplicial complexes which are Whitney complexes of graphs. 
The {\bf tetrahedron} as the 2-skeleton complex of the complete graph 
$K_3$ is still a simplicial complex.
The other two solids have no triangles and are as Whitney complexes of graphs just 
one-dimensional simplicial complexes. We can see them however as $\Delta$-sets or
as {\bf CW complexes}. Lets look at a polygon. The {\bf square} can 
be represented as $G=\{\{1\},\{2\},\{3\},\{4\},\{1,2\},\{2,3\},\{3,4\},\{1,4\},\{1,2,3,4\}\}$ but 
now, $\{1,2,3,4\}$ is a 2-dimensional cell, not a 3-dimensional one. When considering $\{1,2,3,4\}$
to be a 3-simplex, the Betti vector of the $\Delta$ set would be $(1,1,0,1)$. With the adapted dimension
function, the Betti vector becomes the expected $(1,0,0)$ as $G$ is topologically a 2-ball.
In order to deal with a general $G$ and a non-orthodox dimension function, we need to replace the
now assumed ${\rm dim}(x) = |x|-1$.

\paragraph{}
A remark to the de Rham picture (squares) relating to simplicial picture (triangles). Still look at
the polygon $G$ which has an addition 2 dimensional element $\{1,2,3,4\}$. 
If we write a vector field on the 1-dimensional part edges as $(P,Q)$, where $P$ is the function on
the horizontal edges ${1,2},{3,4}$ and $Q$ the function on the vertical edges ${2,3},{1,4}$. 
The exterior derivative of the face maps are just $Q_x-P_y$, known in multi-variable calculus 
courses. This example is related to the 
De Rham theorem which relates the cohomology based on rectangular regions with the cohomology based on 
simplices. The relation can be understood using homotopy as the 2-cell $U(\{4\}) =$
$\{ \{4\}, \{1,4\},\{3,4\},\{1,2,3,4\} \}$ is null-homotop leaving $G \setminus U(\{4\}) =$
$\{\{1\},\{2\},\{3\},\{1,2\},\{2,3\}\}$ which can be seen as $K_3 \setminus U(\{2,3\})$, 
with null-homotop $U(\{2,3\})=\{ \{2,3\},\{1,2,3\} \}$. So, the original square is homotop to 
the complete triangle. The cohomology does not change under homotopies. Open sets help here
to understand the chain homotopy which is needed in the de-Rham theorem. 

\section{More remarks}

\paragraph{}
While this is a {\bf Mayer-Vietoris theme} it is close to classical frame works we would like to point out what
is different. Looking at open sets allows to give explicit representation of cohomology classes in the form 
of a basis of kernels of finite matrices. Not only for the closed sets, where Hodge theory is equivalent but
also for open sets $U = G \setminus K$, where we can interpret $U$ as the set $G/K$ with the point representing 
$K$ removed or as relative cohomology $H(G,K)$. We can see an open set as a particular case of a $\Delta$ set
a larger category than simplicial complexes.

\paragraph{}
The theme fits into a {\bf finitist setting}. We have assumed forms to be real values but the field could be
replaced.  Instead of the field $k=\mathbb{R}$ we could take $k=\mathbb{Q}$ and for a given finite $G$
restrict to finite subsets of $\mathbb{Q}$. Row reduction gives rational and so finite harmonic representatives
of the cohomology classes. All matrices could also be considered over other fields like {\bf finite fields} $K$.
The Betti vectors $b(K,k)$ and $b(U,k)$ still can be defined (but have an other meaning of course). 
We did not investigate the relation between $b(K,k),b(U,k)$ and $b(G,k)$ for finite fields,
if the Betti vectors were defined in the same spectral theoretical way. 

\paragraph{}
Cohomology of open sets is a computationally fast path to 
{\bf relative cohomology} that satisfies the {\bf Eilenberg-Steenrod axioms}.
These axioms state to have have compatibility with arithmetic $b(0)=0,b(1)=1, b(A+B)=b(A+B)$, 
excision $b(G+V,K+V)=b(G,K)$ and compatibility with homotopy.
\footnote{Axiom 4 in \cite{EilenbergSteenrod} needs adaptation} 
The {\bf exactness axiom} appears to be related to the fusion in equality.
The cohomology was {\bf defined} using elementary linear algebra, as kernels of fixed matrices 
and not through equivalence classes of cocycles over coboundaries. By Hodge theory, 
this agrees for closed set with the classical frame work. It also works for open sets. 
In a history section, some references are given to the Hodge approach to cohomology. It started
in the 40ies, was then further used in the second half of the 20th century and reemerged in the 
21th century at various applied situations like \cite{DNA2006}. 

\paragraph{}
Just to illustrate the difference, we should point out that classically, 
the excision property needs to be verified. It is a result called the {\bf excision theorem}. 
In the calculus of the cohomology of open sets, excision given as 
Axiom 6 in \cite{EilenbergSteenrod} does not require proof here. 
Excision is a direct consequence of the set theoretical identity $(G \setminus V) \setminus (U \setminus V) 
= G \setminus U$ for open sets $U,V$ with $V \subset U$ in a topological space $(G, \mathcal{O})$. 

\paragraph{}
The Mayer-Vietoris theorem relates the cohomology of $U$ and $V$ with the cohomology 
of $U \cup V$, if $U,V$ are both closed and $U \cap V$ in overlapping situations. We are already 
not aware of a continuum analog of the example situation, where we split a $q$-sphere $G$ into 
an open $U$ and closed ball $K$. The relative cohomology picture $H(G,K)$ sees this
as the punctured sphere $G/K \setminus \{x\}$, where $x$ is the equivalence class of $K$. 
We see $H(G,K)=H(U)$ the cohomology of $G$ with $K$ removed (we have defined what we mean with 
a cohomology of an open set explicitly) and can define
$H(G,U)$ as $H(K)$ with $U$ removed which is more familiar as $K$ is a simplicial complex.
The interpretation $H(G,U)=H(K)$ is again just an interpretation of the excision property 
$H(G\setminus U,U \setminus U) =H(K,\emptyset) = H(K)$.

\paragraph{}
In Mayer-Vietoris situations, a topological space $G$ is written as a
union of two topological spaces $A,B$ which are therefore closed sets with whose overlap $A \cap B$
is again a closed set.  While $b_U+b_K=b_G$ is not always true, we
have seen here that $H(G)$ as the direct sum $H(U) \oplus H(K)$ with some pair identifications
or ``odd-even" cohomologies. We liked to see the disappearance of cohomology when moving from $(H,U)$ to $G$
as ``collision of particles" as physics often associates harmonic forms with particle-like structures. 
An open $k$-ball in the discrete has the cohomology $(0,0,\dots,1)$, 
the simplest case being a single $k$-simplex $x$ forming an open set $U=\{x\}$. 
Its closure $\overline{U}$ is the complete simplicial complex on $(k+1)$ vertices. 
The Hodge Laplacian of this $U$ is the $1 \times 1$ matrix with $0$ entry,
explaining the Betti vector $b_U=(0,0,\dots,1)$. As $K$ is contractible, its Betti vector is
$(1,0,\dots,0)$. 

\paragraph{}
For every closed $K \subset G$ we have {\bf defined} a {\bf relative cohomology} $H(G,K)=H(U)$ and for 
every open $U \subset G$ we have defined a {\bf relative cohomology} $H(G,U)=H(K)$. 
This is notation, but it agrees often with what one understands as a {\bf generalized homology
theory} as defined by Eilenberg and Steenrod in 1942. 
In particular, by definition, the {\bf excision} property $H(G \setminus V,K \setminus V)=H(G,K)$
holds for open sets $V \subset K$ and $H(G \setminus L,U \setminus L)
=H(G,U)$ holds for closed subsets $L \subset U$, reflecting that the complement of $K \setminus V$ 
in $G \setminus V$ is the same than the complement of $K$ in $G$. 
The axioms of {\bf Eilenberg-Steenrod} \cite{EilenbergSteenrod} were the homotopy, exactness, additivity 
and excision. In general $H(G,\{x\})$ is the {\bf reduced cohomology}.
$H(G, G \setminus \{x\})= H(\{x\})$ is the cohomology of a point. 

\paragraph{}
The {\bf quotient} $G/K$ can be defined as the 1-point compactification 
$\dot{U}$ of the open set $U=G-K$, which is a $\Delta$ set and not a simplicial complex in general. 
We can think about $G/K$ as taking $G$ and collapsing all simplices in $K$ to points. 
If we wanted to write down the $\Delta$-set define the zero-dimensional complex $1=\{p\}$. This
already determines that the cohomology just adds $e_1=(1,0,0,...)$. As for the topology, just add
the union $U \cup \{p\}$ as an additional open set. The space $U/K$ now connected as $\{p\}$
is not an open set and $G$ can not be written as a union of disjoint open sets because one of them
must contain $p$ and so must be the entire $G$. 
We $H(U) = H(G,K)=H(G/K \setminus \{p\})$. That $G/K$ can in general not be matched to a simplicial complex
follows from the {\bf Kruskal-Katona} constraints \cite{Lovasz1993,Knuth2011} 
and the fact that open sets do not have such constraints.

\paragraph{}
We can see however $G/K$ as a {\bf $\Delta$-set} which has a topology and cohomology. 
For example, if $K$ is the $(q-1)$-dimensional {\bf skeleton sub complex} of of $q$-sphere, then $G/K$
corresponds to a collection of open balls (the maximal simplices) and $b(G/K)-e_1=b(U)=(0,0,\dots,0,m)$,
where $m$ is the number of maximal simplices. Topologically, $G/K$ would be considered
a {\bf bouquet of $m$ spheres of dimension $q$} and $G/K-\{x\}$ as a collection of $m$ disjoint open 
$q$-balls. We can identify $G/K$ naturally with the {\bf 1-point compactification} 
of $U$ so that $b(G/K) = b(U) + e_1$. In the case of $b_0$ connected components we have 
$b(G/K) = b(U) + b_0 e_1$. This can be iterated. The sphere $G=\{ \{3\}, \{1,2\}  \}$ for 
example can be seen as a quiver with $1$ vertex and $1$ loop if $\{3\}$ is considered a 
{\bf closed set} so that $G$ is not disconnected. 
\footnote{If $\{3\}$ was considered open, we would deal with an open set with a single
open vertex and an open edge.} We see it as a 1-point compactification
of the simplex $G_0=\{1,2\}$, which is a $1$-ball (= open interval). we see in this example
that we might have to say whether given $\Delta$-set is open or closed. We can think of 
$G=\{ \{3\}, \{1,2\}  \}$ as a disconnected open set with two disjoint sets in its closure 
$K_2+K_1$ or then as a closed, connected topological space with three open sets 
$\{ G,\emptyset, \{1,2\} \}$  and and closed sets $\{ G,\emptyset, \{3\} \}$. The later
interpretation with 3 open ses is a $1$-sphere, a 1-point compactification of a 1-ball. 

\paragraph{}
Lets look at two examples leading to loops or multiple connections. If is the cyclic complex $G=C_n$
and $K$ is the zero dimensional skeleton complex in $G$ consisting of $n$ points, then $G/K$ is a
quiver with one vertex and $n$ loops. The open set $U$ consists of $n$ edges and $b(U)=(0,n)$. 
So, $b(G/K)=1+b(U)=(1,n)$ which is the cohomology of a bouquet of $n$ spheres. As a second example, 
lets look at $C_n$, a circular complex of length $n$. Take for $K$ the path complex $K=P_{n-2}$ of length $n-3$
with $n-2$ vertices and $n-3$ edges. Now $U$ is an open set $U(x)$ of a vertex which has $b(U)=(0,1)$. 
The quotient $G/K$ is the $1$-point compactification of $U$ which is a multi-graph with two vertices and
two edges connecting them. Its cohomology is $(1,1)$ and indeed this should be considered a circle. 
If we would have taken a path complex $P_{n-1}$ away, then $U$ would have one edge only again with $b(U)=(0,1)$
and $b(G/U)=(1,1)$, but now $G/U$ is a vertex with a single loop. This should again be seen as a non-simplicial
complex implementation of a circle. In general, the one point compactification of a $q$-simplex $U$ with 
$b(U)=(0,0, \dots, 0,1)$ is a $q$ sphere with $b(\dot{U})=(1,0, \dots, 0,1)$, the smallest $\Delta$ set 
implementation of a sphere. 

\paragraph{}
Open sets are useful also when looking at complexes allowing a {\bf group action}.
Let $A$ be a cyclic subgroup of the {\bf automorphism group} of $G$, let $K$ be
the {\bf fixed point set} of $A$. The complement $U$ is invariant. Assume that $V=U/A$ is the
{\bf fundamental region}, the smallest open set such that $A V = U$. While $U/A$ is open,
the object $G/A$ is not a simplicial complex and not closed. 
Because $\chi(V) = \chi(U/A) = \chi(U)/|A|$, we have the {\bf Riemann-Hurwitz formula} 
$\chi(G) = |A| \chi(G/A) + (|A|-1) \chi(K)$. 

\paragraph{}
This can be generalized to general groups $A$ where $\chi(G) = |A| \chi(G/A) + R$, where $R=\sum_x (e_x-1)$ 
is a ramification part with $e_x=1+\sum_{a \neq 1, a(x)=x} (-1)^{|x|}$. In the zero-dimensional case, 
this is the {\bf Burnside lemma} $|G| + \sum_{a\in A, a \neq 1} \sum_{x \in G, a(x)=x} 1 = |A| |G/A|$. 
For the general case one can apply the Burnside lemma on each of the $k$-dimensional simplices. 
For a group element $a \in A$, the fixed point set $K=G_a$ (also called 
{\bf stabilizer} in group theory is a {\bf closed set}, while the complement $U$ is open.
The open components are permuted around. Allowing an analysis, where we can decompose $G=K+U$ and apply
the analysis separately for $K$ and $U$ allows us to deal with a geometric situation as if it was
a set theoretic situation. To say it in fancy words: the group action on a geometric space (Riemann-Hurwitz)
like the topos of $\Delta$ sets can be dealt with like in the case of group actions on the 
topos of sets (Burnside)

\paragraph{}
A special case is when the group $A$ acts without fixed points on $G$ like for example if $G=H \times A$ or more generally
on a {\bf fibre bundle} with fibres $A$, then $\chi(G) = \chi(H) |A|$ which is the product 
formula for Euler characteristic assuming the group $A$ has the discrete topology and 
so is considered a $0$-dimensional complex.
For example if $A$ is an involution of a $q$-sphere $G$ with fixed point set $K$, a $(q-1)$-sphere
and where $G/A$ is a $q$-ball, we have we have $\chi(G) = 2 \chi(B) + \chi(K) = 2-\chi(K)$.
We see that the Euler characteristic of q-spheres is $1+(-1)^q$. We also see that the Euler characdteristic
of a projective space is half the Euler characteristic of the covering sphere. 

\paragraph{}
Every {\bf $\Delta$-set} $\{S_i\}_{i=0}^q, d_i: S_{i+1} \to S_i$ satisfying $d_i d_j=d_{j-1} d_i$ for $i<j$
can be modeled by repeating the closed pair construction of simplicial complexes. It is just important to keep
track of the topology.  The reason why a general $\Delta$ set can be seen as a quotient is when looking 
at the $G$ as a disjoint sum $G=\sum_i (S_i \times K_{i+1})$ modulo $K$, where $K$ is the complex
of all pairs $(x,\delta_i y) + (d_i x,y)$ for $x \in S_i$ and $(i-1)$ simplices $y$ with face maps $\delta_i$ and $x \in S_i$ 
in $G$ of a $\Delta$-set. Turning things around, we can, instead of using the data structure of Delta sets
use the language of {\bf open sets} in generalized simplicial complexes which includes Cartesian products of
simplicial complexes.  Computing the cohomology is no not
more complicated than computing the cohomology of a simplicial complex and the theorems we see for simplicial 
complexes have analog statements for open sets. While Gauss-Bonnet for example puts the {\bf curvature } 
on vertices for closed sets, in the open case, the curvature is supported on the locally maximal sets.
The cohomology of a connected $\Delta$-set $X=G/K$ is the cohomology 
of the 1-point compactification of  an open set $U=G \setminus K$,  where $G$ is itself the cohomology of an open set
etc.  The goal is to compute the cohomology of a rather {\bf arbitrary $\Delta$ set}  fast.
This already can pay off for the simplest possible $\Delta$ sets which are not simplicial complexes like the 
Cartesian product of complexes or the $\Delta$-sets associated to {\bf quivers}. 
We hope to explore this further elsewhere.

\paragraph{}
We can now work with {\bf topological pairs} $(G,K)$ in which $G$ itself is a topological pair
and where the starting point of a topological pair $(G,K)$ in which $G,K$ are simplicial complexes. 
There is a map $(G,K) \to G/K=\dot{U}$ on topological spaces, where $U=G\setminus K$ and $\dot{U}$
is the 1-point compactification. When taking different such maps, we can realize equivalence classes of 
simplicial complexes. A {\bf $\Delta$ map} is then be simply {\bf continuous map} of the 
topological space representing it. One can see $S_i$ as the set of open sets in $U$ of dimension $i$ and $d_i U$ as
the set of open sets of dimension $(i-1)$ in the boundary $\Delta U=\overline{U} \setminus U$ of $U$, where $\overline{U}$ is 
the smallest simplicial complex containing $U$. In summary, the 1-point compactification of a closed sub $\Delta$-set
in a $\Delta$ set is the same as the $\Delta$-set $G/K$.

\paragraph{}
As an example, the one-point compactification of a single open set 
$U=\{x\} =\{ \{1,2\} \}=(G,K)$ within the simplicial complex $G=\{ \{1\},\{2\},\{1,2\} \}$
is a $\Delta$-set $\dot{U} = \{ \{0\}, \{1,2\} \}$ where the only addition open set added is $\{0,1,2\}$.
It is now closed because $\{0,1,2\}$ is declared to be closed too. Note that if we would declare $U$ itself
closed, then it would as a $\delta$ set just be a 1-point set and so contractible. 
\footnote{Compactifying an open set by just declaring it to be closed is the {\bf silly compactification} 
as a smallest open set $U(x)$ containing a simplex of maximal dimension $x$ in the boundary would also be
closed so that the compactification would be disconnected.} 
Now, $\dot{U}$ is topologically a circle and 
{\bf not a simplicial complex}. As a graph, it is a {\bf quiver} $(V,E)$, where $V$ is a single point
and $E$ is a {\bf single loop} on the point. 
Sometimes, we get a simplicial complex: for the one-point compactification of 
$U=\{ \{1,2\},\{2\},\{2,3\},\{3\},\{3,4\},\{4\}, \}$ we get a 
topological circle and a simplicial complex, namely $G=\{ \{1\}, \{1,2\},\{2\},\{2,3\},\{3\},\{3,1\} \}$. 

\paragraph{}
In graph theory, where one knows the {\bf edge collapse} in a graph. 
Given a graph $(V,E)$ we can look at the Whitney simplicial complex $G$, the set of vertex sets of complete 
subgraphs of the graph. Given an edge $e=(a,b)$ in the graph, it defines the {\bf closed set} 
$K = \{ \{a\},\{b\},\{a,b\} \}$. Now, $G/K$ is most of the time again a simplicial complex and even 
the Whitney complex of a new graph in which the edge $e$ has collapsed by identifying the vertices $a,b$.
Already for the cyclic graph $C_4$, doing an edge collapse produces $C_3$, which is no more the simplicial 
complex of a graph because the Whitney complex of $K_3$ is the complete complex with $3$ elements. 
We can proceed however with $C_3$ as the 1-dimensional skeleton complex of $K_3$. Now, if we do an other
edge collapse, we lose the property of having a simplicial complex. We have the 1-point compactification 
of the open set $U=\{ \{2\},\{1,2\},\{2,3\} \}$ which can be seen as a {\bf multi graph} with two edges connecting
two vertices. Indeed, $b(U)=(0,1)$ and $b(G/U)=(1,1)$. We can look at the edge collapse as a $\Delta$-set 
representing a circle. 

\paragraph{}
One can define a {\bf spectral partial order} on arbitrary symmetric matrices $A,B$ by asking 
$\lambda_k(A) \geq \lambda_k(B)$ if the sequences $\lambda_1(A) \leq \cdots \leq \lambda_m(A)$ 
and $\lambda_1(B) \leq \cdots \leq \lambda_n(B)$ are asked to be ordered ascending and if the sequences are 
{\bf left padded} meaning that 0 entries are inserted on the left of the shorter sequence until the sequences
have the same length.  The spectral partial order can apply for any finite symmetric matrices. If $A=[a]$ is a $1 \times 1$ matrix 
for example, then $A \leq B$ is equivalent that the maximal eigenvalue (the spectral radius) is $\geq a$. 
If $A,B$ have the same size and $A$ is larger or equal than $B$ in the {\bf Loewner order} 
which means that $B-A$ positive definite. 
then $A \leq B$ but the reverse is in general not true.
In our case, we have $L_U \leq L_G$ and $L_U \leq L_G$ in the spectral order, 
but the matrices $L_K,L_G$ are {\bf not} Loewner ordered in general. 

\paragraph{}
We see that the difference $b(I) = b(K)+b(U)-b(G)$ is always the sum of
adjacent pairs $(0,\dots,0,1,1,0,\dots,0)$. At the moment, this is still unproven. 
One way to argue why this is true, start with $U$ is empty and successively add new simplices.
There is a dichotomy when adding a $k$-simplex $x$:  
(I) $b(K) \to b(K) + e_k$ or (II) $b(K) \to b(K) - e_{k-1}$. 
Dual is adding a $k$ simplex $x$ to $U$
$b(U) \to b(U) + e_k$ or $b(U) \to b(U) - e_{k+1}$. 
This means that if we remove a $k$-simplex from $U$ we have 
(I) $b(U) \to b(U) - e_k$ or (II) $b(U) \to b(U) + e_{k+1}$.
To prove the dichotomy, look at the Dirac matrix $D$ of $K$ which has the same kernel than $L$.
Removing $x$ means deleting the column x and
row x from $D$. This affects only the $d_k^*$ block (it can reduce the kernel by 1 or not)
and the $d_{k-1}$ block (it can add 1 to the kernel by 1 or not). Not both
are possible. The same argument works for the Dirac operator of $U$, where
we can either remove a column of the $d_k$ block, possibly removing a kernel dimension
and remove a row of the $d_{k+1}$ block possible removing a kernel there.
In order to verify the still unsettled conjecture that $b_I$ is a sum of 
adjacent blocks $e_j+e_{j+1}$, we would only need to show that the combination 
changing $b_{k-1}(K)$ and $b_{k+1}(U)$ can not occur at the same time. 

\paragraph{}
Given a simplicial complex $G$ and an open set $U$ and its complement $K$,
we can see this as a map from $G \to \{0,1\}$. Not all functions can be
realized as such as $U$ needs to be open and $K$ is closed.  We can however
take a random locally maximal simplex $x$ in $K$ and move it to $U$, where it 
will be locally minimal. Or we can take a locally minimal simplex $x$ in $U$
and move it to $K$, where it will be locally maximal.
This allows to define a Markov stochastic dynamics.  We used this for example 
to investigate for which $K$ the {\bf interface cohomology} $b(I)=b(K)+b(U)-b(G)$ 
has maximal norm. 

\paragraph{}
The inequality tells that if we split a system $G$ into 
two subsystems $K,U$, then can only enlarge the dimension of the harmonic forms.
Fusing two such systems together in general decreases
the total dimension of harmonic k-forms. 
We have mentioned this result already in \cite{HodgeInequality}. 
We noticed early January 2023 that in the manifold case, that we often have
equality when doing the connected sum construction. 

\paragraph{}
Having seen the example, where $K,U$ are nice parts of a manifold glued along
a sphere boundary leading to inequality, one can ask what happens if $K$ 
is a sub-manifold of $G$. When do we have equality? 
We do not have equality in general then: 
take a circular sub-complex $K$ of the octahedron graph $G$. Then $b(U)=(0,0,2)$ and 
$b(K)=(1,1,0)$ and $b(G)= (1,0,1)$. By the way, what happens if
$K$ is a {\bf classical knot}, that is if $K$ is a circular subgraph (simple closed curve)
in a $3$-sphere $G$. As $\chi(K)=\chi(G)=0$, also $\chi(U)=0$. 
We have $b(G)=(1,0,0,1)$ and $b(K)=(1,1,0,0)$. Because of the inequality we have
$b(U)=(0,0,1,1)+(0,a,a,0)$ with $b \geq 0$. 

\section{Some historical context}

\paragraph{}
For a modern account in algebraic topology, see \cite{MunkresAlgebraicTopology}.
The axiomatic approach to homology started with Eilenberg and Steenrod \cite{EilenbergSteenrod} 
in 1945 with work starting in 1942. Around the same time, in 1945, Eilenberg and MacLane initiated 
the language of category theory \cite{EilenbergMacLane1945} allowing to see the concept more abstractly
like building the {\bf category of chain complexes}. 
\cite{EilenbergMaclane} look at {\bf star finite complexes} 
which are abstract simplicial complexes in which very star $U(x)$ intersects 
only with finitely many other stars $U(y)$. Some finiteness condition needs to be present
as a collection of countable set of isolated points, a Cantor set, or a
a Hawaiian earring are not finite. 

\paragraph{}
For {\bf finite objects} like {\bf finite abstract simplicial complexes}, 
or generalizations like {\bf CW complexes} or {\bf $\Delta$ sets} like for example
$\Delta$ sets coming from multi-graphs, it suffices to
combine all exterior derivatives to one nilpotent
$n \times n$ matrix $d$ and define the cohomology groups as the {\bf null spaces} of the 
blocks of $(d+d^*)^2$ and the Betti numbers as their dimensions. This keeps all objects and 
computations finite. In the case of $\Delta$ sets, giving the set of sets $G$, the 
matrix $D$ and the integer-valued dimension function ${\rm dim}$ suffices. Especially for
CW complexes which are inductively defined by attaching $k$-balls to already present 
$(k-1)$-spheres, both the matrix $D$ and the dimension function can not be derived directly
from the set of sets $G$ in general. 

\paragraph{}
The category of {\bf simplicial complexes} today is seen as a {\bf quasi topos}. Its completion
is a topos. Simplicial sets and more generally $\Delta$-sets which is a topos too. Both are presheaves
over a simplex category. To every $k$-dimensional simplex $\Delta$ is attached a set $S_k$. Every 
$\Delta$ set is now a contravariant functor on the strict simplex category with monotone maps as 
morphisms. The contravariant functor property leads to the property $d_i d_j = d_{j-1} d_i, i<j$ for
the face maps. We prefer to see things even more general by looking a set $G$ of non-empty sets for which 
the exterior derivative $d$ satisfies $d^2=0$ and where one has a {\bf dimension functional} relating
$|x|$ with the ${\rm dim}(x)$. In the case when ${\rm dim}(x)=|x|-1$ we deal with {\bf simplicial complexes}
or {\bf open sets} or $1$-point compactifications $\overline{U}$ of open sets. In the case of a 
product of simplicial complexes, ${\rm dim}(x)=|x|-2$. In the simplest case, where $G=H=1=\{1\}$ 
we have $G \times H = \{ \{1_G,1_H\} \}$. 
\footnote{Since $\Delta$-sets are a contravariant functor of a sub-category of the simplex category, they are
more general than simplicial sets. One can get a $\Delta$-set from a simplicial set by applying a forgetful functor
to the simplicial set, ignoring the boundary map conditions. $\Delta$ sets are not only 
{\bf more general} they also have {\bf less complexity}.} 

\paragraph{}
For sub-complexes $K$ of a complex, the Hodge definition of cohomology is equivalent to
the classical situation. The importance of Harmonic functions in the context of cohomology was
stressed in \cite{Hodge1933} but might have started already in 1931. 
On open sets $U$ in $G$ however and $\Delta$ set s more generally, it 
opens more possibilites. An open $d$-dimensional ball $U$ for example has the Betti vector 
$b(U) = (b_0,\dots ,b_d) = (0, \dots, 1)$. It can be seen as a punctured $d$-sphere $G$ but 
as $G \setminus \{x\}$ and not $G \setminus U(x)$ which is a closed $d$-ball. 
The set $U$ is the vessel for a volume form but no other harmonic
form. It can be seen dual to a closed ball harboring only a $0$-form and nothing else. 
An interesting question is whether {\bf Poincar\'e-Duality} can be understood in such a way
that an orientable manifold $G$ can be split $G=U \cup K$ with $b(U)=(b_0,\dots,b_q)$ and
$b(K)=(b_q,\dots,b_0)$. In many examples, we have seen that, we can {\bf split the manifold in half}
so that $b(G)$ is the sum of a $b(K)$ and a mirror part $b(U)$. 

\paragraph{}
According to \cite{Dieudonne1989, Pont1974} it took 30 years to build a rigorous theory that is
applicable to general manifolds and that embodying all the ideas initiated by Poincar\'e and Betti. 
Still following \cite{Dieudonne1989}, it was Herman Weyl in 
1923 who pursued a purely algebraic homology theory. This is nowadays more entrenched in
{\bf algebraic combinatorics} rather than {\bf algebraic topology}, subjects which have
moved apart with respect to abstraction levels. Mathematicians like Hassler Whitney 
still saw the subjects together. Whitney also saw graph theory close to topology. Later in the
20th century graph theory has separated and become primarily a theory of one-dimensional 
simplicial complexes or, or when embedded into surfaces as part of
{\bf topological graph theory}. For a systematic treatment of combinatorial topology, see
\cite{Kozlov}, where however abstract simplicial complexes contain the empty set. 

\paragraph{}
Dehn and Heegard introduced abstract simplicial complexes in 
1907 \cite{DehnHeegaard}. 
Eckmann \cite{Eckmann1944} had noted already that Hodge works in a finite linear algebra setting.  
\footnote{We learned elementary fact from a talk by Beno Eckmann and classical Hodge theory from \cite{Cycon}.} 
Eckmann in 1944 credits his PhD advisor Hopf for the inspiration to look at combinatorial versions only. 
Indeed, both Hopf or Alexandrov used finite combinatorial frame works and 
Alexandrov took finite topological spaces serious \cite{Alexandroff1937},
not bothered by the non-Hausdorff property of these finite topological spaces. He would still use
geometric realizations in \cite{alexandroff} but the concept of purely combinatorial notions called
{\bf unrestricted skeleton complexes} (like page 121 in Volume 1 of \cite{alexandroff}.
Historically interesting is that Felix Hausdorff, one of the pioneers of {\bf set theoretical topology}
against the advise of his friend Pavel Alexandroff, decided to cover topology
in his set theory book from the point of view of metric spaces, which then of course
are Hausdorff \cite{HausdorffIII}.


\paragraph{}
The Dirac operator appeared first in 1928 \cite{Dirac1928}.
The square root $D=d+d^*$ of the Hodge operator $L=d d^*+d^* d$ is definitely the simplest square
root of a Laplacian and does not require Clifford algebras, $D$ is also called 
{\bf abstract Hodge Dirac operator} \cite{LeopardiStern}. 
We like it because it is so accessible and is pure linear algebra \cite{KnillILAS}. 
We have made some historical remarks also in \cite{KnillTopology2023,AmazingWorld}.

\paragraph{}
The Kirchhoff Laplacian $L_0$ appeared first in Kichhoff's work in 1847 \cite{Kirc}.
Eckmann's work \cite{Eckmann1944} and the relation of discrete and continuous difference forms \cite{Dodziuk} in 1976,
The method to compute Betti numbers using the kernels of $L_i$ appears in \cite{Friedman1998}. 
The spectra of Laplacians $L$ were investigated in \cite{DanijelaJost,DanijelaJost2}
and estimates given using the Courant-Hilbert theorem, already used in \cite{Dodziuk} who 
also looks already at the zeta function $\zeta(s) = \sum_{\lambda \neq 0} \lambda^{-s}$. 

\section*{Pictures}

\paragraph{}
The topic can be looked at in various ways. \\
 
{\bf 1)} {\bf Cohomology of open sets}. 
The starting point was to extend cohomology from simplicial complexes $K \subset G$ to open sets $U \subset G$ 
in the finite topology defined by $G$. In this finite non-Hausdorff set-up, we can not look at $U$ as a subset of 
its {\bf completion}, as we do not have a metric. In the example of simplicial complexes, 
we can look at $U$ as part of its {\bf closure} $G=\overline{U}$ which is the smallest simplicial 
complex containing $U$.  If $K$ is the complement of $U$ in $G$, then 
we can see $U$ as the pair $(G,K)=(\overline{U}, \delta U)$. For $U=\{ \{x\}, \{1,2\} \}$ for example,
we see $U = G \setminus \{ \{2\} \}$ as the pair $(G,\{2\})$. 
The set $U=\{ \{ x \},\{1,2,3 \} \}$ is neither open nor closed in its closure. 
When talking about an open set, one could understand it as an open set in its closure. As we have
defined the cohomology, if $U \subset \overline{U} \subset G$ and $\overline{U}$ 
then the cohomology of $U$ does not need to see the embedding in $G$. We can not see the cohomology of $U$
by looking at the cohomology of its closure $\overline{U}$. 
However, we can see $U$ is part of the quotient $G/K=\dot{U}$, where $K=G \setminus U$ and have a relation.
This is the $1$-point compactification of $U$ and in general completely unrelated to the closure $\overline{U}$. 
In the case when $U=G \setminus \{ \{ x_0 \} \}$, the {\bf one-point compactification} is $G$. 
For example, $U=\{ \{1,2\} \}$ has the {\bf simplicial complex closure} 
$\overline{U}= \{ \{1\},\{2\},\{1,2\} \}$ but it is also an  
open set the one-point compactification $\dot{U}=G/K$, with $K=\{ \{1\},\{2\} \}$. This means that $U$ 
is an edge in a quiver with one vertex and one loop. As we defined it, the cohomology does not
depend on how we see $U$ compactified as long as we can see $U$ as the complement of a closed subset
$K$ in a simplicial complex $G$.  \\
 
{\bf 2)} {\bf More general objects}. Unlike the cohomology of open sets, the cohomology of closed sets is classical 
because every closed set is in itself a simplicial complex and the cohomology of a subcomplex $K$ of $G$ does 
not relate to how $K$ is embedded in $G$. 
The finite Alexandrov frame-work with a finite topology $\mathcal{O}$ on $G$
allows to extend cohomology to open sets. The complexity of the theory can be best seen from the computer science
point of view: there are non-specialized programming languages in which a basis for all the cohomology groups
can be obtained in {\bf in a handful of lines of code without invoking any libraries}. We only need 
standard procedures like computing the kernel of a finite matrix. The code listed
below not only does give the dimension of the cohomology groups, but provides a concrete basis 
elements of the harmonic forms representing the cohomology groups. We also can look at concrete {\bf harmonic 
forms} for open sets or harmonic forms for $\Delta$ sets. We also have a rich field of experimentation 
for $\Delta$ sets, and especially simplicial sets or  quivers or multi-graphs.  \\

{\bf 3)} {\bf Manifold cutting}. We looked first at the case when $G$ is a manifold and $K,U$ partition
it into an open and closed set. The prototype is to cut a genus $g$ surface in half and keep the
boundary on one side. In that case, if the surface was orientable, we saw additivity of 
cohomology. In the case when $G$ is non-orientable, this was impossible in general.
For example, if $G$ is a projective plane, when cutting
away an open ball produces a volume form on that part which must disappear when fusing it with 
the closed M\"obius strip. We see a fusion of the $2$-form $f$ on $U$ with a $1$ form $g$ on $K$. 
Eilenberg and Steenrod would call $g$ the boundary form of $f$.  This manifold cutting picture
is close to the {\bf Mayer-Vietoris} frame work. In the later, one would take two closed parts
$A,B$ of the manifold and relate the cohomology of $A$ and $B$ and $G$ and $A \cap B$. 
There is no simple formula for the Betti vectors. If $A,B$ are closed balls glued along a $1$-sphere
$A \cap b$ we have $b(A)+b(B)-b(A \cap B) = (1,0,0) + (1,0,0)-(1,1,0) = (1,-1,0)$ and
$b(G) = (1,0,1)$. A more fancy example is the complex $G$ obtained by taking an open $2n$ ball $U=B_{2n}$ 
in which the boundary $K=S^{2n-1}$ is glued by a continuous map $\phi: S^{2n-1} \to S^n$ to a disjoint $K=S^n$. 
Now $b(G) =b(K)+b(U)= (1,0, \dots, 0,1,0, \dots 0) + (0, \dots,0,0,0, \dots 1) = 
(1,0, \dots, 0,1,0, \dots, 0,1)$. The cup product of the middle cohomology class with it self is now a
multiple of the cohomology class of the volume form and this multiple is the {\bf Hopf invariant} of $\phi$. 
The cases $n=1,2,4,8$ are by a theorem of Adams and Atiyah the only cases with Hopf invariant 1. \\

{\bf 4)} {\bf The laboratory picture}. We can see $G$ is a model for a ``large world" in which we can
do physics. A closed set $K$ is a small ``laboratory" in which we do experiments. It is typical in 
laboratory frame works that we {\bf neglect} a lot of other things. This is modeled here with $U$. 
The experiment could for example be a small quantum mechanical system. By itself, this is modeled by a Hilbert
space and a unitary evolution $U(t)=e^{i h t H}$ on this space so that $u(t) = U(t) u(0)$ solves
the Schr\"odinger equation $u'=i h H u$. The spectral properties of $H$ determine
everything. We as an observer are in $G$. When doing a {\bf measurement}, the system $K$
and its complement $U$ are no more independent. This leads to seemingly severe paradoxa like 
``wave collapses" postulated by the Kopenhagen interpretation.
We have here a situation where can on a spectral level compare the separated system $(K,U)$ with
the joined system $G$. The energies of $(K,U)$ are in general smaller $G$, but not always. We believe that
after a suitable Witten deformation (which does not change the $0$-eigenvalues but can change the other 
eigenvalues), that the spectrum of the separated system $L_{K,U}$ is bounded above by the spectrum of 
the joined system $L_G$. We proved smaller than twice the energies of $G$ which is enough to compare 
the dimensions of harmonic states of $G$ with the dimensions of the harmonic states in the 
laboratory-environment system. \\

{\bf 5)} {\bf Harmonic forms as particles.} 
The {\bf Maxwell equations} $dF=0,d^*F=j$ in vacuum, $j=0$ tell that if $G$ is simply connected,
then the electro magnetic field is given by a $1$-form via 
$F=dA$ and $d^* d A =0$.  This can be reinterpreted as $L_1 A =0$ if $A$ is in a {\bf Coulomb
gauge}, meaning $d^* A=0$ (just add a gradient $A+df$ such that $d^*(A+df) = d^* A + d^* df = d^* A + L_0 f=0$).
In other words, electromagnetic waves in vacuum is directly associated to 
{\bf harmonic 1-forms}. Rephrasing this in a particle frame work, we could say that harmonic
$1$-forms represent {\bf photons}. Taking this as a motivating picture, we can also view
{\bf harmonic p-forms} as {\bf particles}. In this picture, we can ``create some particles"
when going from the fused system $G$ into a separated system $K \cup U$. 
If we trap free particles in a closed box $K$, then new harmonic forms emerge. The harmonic forms 
in $G$ still survive in some sense also in the separated system. They can be located either 
in $K$ or in the complement $U$. \\

{\bf 6)} {\bf Projection picture}. If we split a simplicial complex $G$ with $n$ elements
into an open and closed part $G=U \cup K$ we create
two projection matrices $P,Q$. The first map $P$ projects $\mathbb{R}^n$
the space $V$ of harmonic forms on $U \cup K$. 
The second projection $Q$ projects onto the  space $W$ of harmonic forms on $G$.
If $A$ contains the basis of $V$ as columns and $B$ contains the basis of $W$ as columns,
then $P = A (A^* A)^{-1} A^*$ is the projection from $\mathbb{R}^n$ to the kernel of $V$.
 and $Q = B (B^* B)^{-1} B^*$ is the projection from $\mathbb{R}^n$ to the kernel of $W$. 
The concept of two projections is quite common in statistics as the projection onto a smaller dimensional 
space is a {\bf data fitting} process. 
\cite{AndersonJamesTrapp} study the relation between $P-Q$ and $PQ$. The eigenvalues of $P-Q$ 
different from $0$ and $ \pm 1$ determine the eigenvalues of $PQ$ different from $0$ and 
$1$ and that the eigenvalues $1$ of $P-Q$ is equal to $\sum_p b_p(I)$, a number which is
the nullity of the {\bf interface Laplacian} $L(I)=A^T B B^T A$ which is block diagonal too. 
Now $b_p(K)+b_p(U)-b_p(G)= b_p(I)$ is the nullity of $L_p(I)$, the restriction to $p$-forms.
We have therefore a Hodge picture for the interface cohomology in that $b_p(I)$ could be 
computable by looking at kernels of matrices.\\

{\bf 7)}  {\bf Homotopy picture}.
We call an open set $U$ {\bf contractible open set}, if both the closure $\overline{U}$ and the boundary 
$\delta(U) = \overline{U} \cap K$ with $K=G \setminus  U$ are contractible. This means that the
one-poin compactification $\overline{U}$ is a contractible closed set. One can therefore see 
contractible sets identified with punctured contractible closed sets. The empty set $0$ is an 
example of a contractible open set. An other example is $U=K \setminus L$ where both $K,L$ are 
contractible and $L \subset K$. It immediatly follows that $b(U)=0$ if $U$ is contractible. As no 
cohomology can be exchanged with $K$, the complex $G \setminus U$ has the same cohomology than $G$. 
A transformation $G \to G \setminus U$ is a {\bf homotopy reduction}. The reverse is a homotopy extension. 
A small non-empty example is $U = \{ \{1\},\{1,2 \} \}$ which is null-homotop.
The fact that the cohomology does not change if $U$ is homotop to $0$ can be reformulated as 
$H(G,0) = H(G) = H(K) = H(G,U)$. \\

{\bf 8)} {\bf Quotient space}.  If $K$ is a subcomplex of $G$, we can simply {\bf declare} $G/K$ to be the 
1-point compactification of the open set $U=G \setminus K$. Most of the quotients $G/K$ are not
simplicial complex any more. This classically prompted to extend simplicial complexes
to {\bf simplicial sets} or more generally {\bf $\Delta$-sets} = semi-simplicial sets, which are 
simplicial sets after applying a forgetful operator. $\Delta$ sets are still quite intuitive.
Simplicial sets were introduced in 1950 by Eilenberg-Zilber. Since every $\delta$ set naturally can 
be realized as $G/K$, we get, by looking at pairs of simplicial complexes $G,K \subset G$ a structure that is both
intuitive and also allows to compute some cohomologies fast because $H(U)=H(G \setminus K)=H(G,K)$. 
From a computer science point of view, it is more difficult to work with {\bf equivalence classes} of objects.
Direct implementations with concrete data structures is easier. Instead of working with {\bf cohomology classes}  
we concrete elements of the kernel of concrete matrices and this also applies to quotients. \\

{\bf 9)} {\bf Relative cohomology}.  We have {\bf defined} $H(G,K)$ as $H(U)$. Now, also relative cohomology can 
be dealt with without changing any code.  We have a cohomology that satisfies the {\bf Eilenberg-Steenrod }
axioms. These axioms essentially tell that $b(0)=0,b(1)=1, b(A+B)=b(A)+b(B)$, that $b(K+U)=b(G)$ if $b(U)=0$.
We could also include the inequality $b(G) \leq b(K)+b(U)$. We will also have K\"unneth telling that we have
compatibility with Cartesian products in the form $b(G \times H) = b(G) * b(H)$, where $*$ is the convolution of sequences. 
We can use the set-up to compute cohomology faster. Chose an open ball $U$ for example which could be a single $q$-simplex.
Then $G=\overline{U} = U \cup \{x_0\}$ is a $q$-sphere and $b(G) = b(U)+b(1)=(1,0, \dots, 0,1)$. This 
computation of the cohomology of a sphere is probably the fastest, as we only need to compute the kernel 
of $1 \times 1$ matrix to get the cohomology of $U$ as all other matrices are $0 \times 0$ matrices. 
Mathematically, we represent the $q$ sphere as $G=\{ \{0\}, \{1, \dots, q+1\} \}$ which can be seen as a very 
small $\Delta$-set but which  is not a simplicial complex. The topology $\mathcal{O}$ of $G$ 
only has the open sets $\mathcal{O} = (\{\},\{\{1, \dots, q+1\}\},G)$ and closed sets $(\{\},\{\{0\} \},G)$,
definitely a {\bf locale} a pointless topology. \\

{\bf 10)} Also the {\bf Cartesian product} $G \times H = \{ (x,y), x \in G, y \in H\}$ of complexes is a
a $\Delta$ set. It could be seen as an open set in a simplicial complex. 
But as a $\Delta$-set, we must see 2-point sets sets as the new vertices and have dimension $0$. 
This shifts the cohomology. For example $G=\{\{1\},\{2\},\{1,2\}\}$ and $H=\{\{3\},\{4\},\{3,4\}\}$, then 
$G*H=\{ \{1,3\},\{1,4\}, ....  \}$. If we see this as a $\Delta$-set, where $S_0$ are the sets with 2 elements
we have a Cartesian product for which the K\"unneth formula holds by definition, as we can just take the 
{\bf tensor product} of the harmonic forms. 
The {\bf Barycentric refinement} in the Stanley-Reisner picture is much heavier. It involves much large matrices. 
The lean $\Delta$-set implementation only involves $nm \times nm$ matrices if $G$ has $n$ elements 
and $H$ has $m$ elements. Thus, in order to get the cohomology of the product, take the convolution of 
the two Betti vectors of the factor and shift to the left. 
An additional bonus is that we can get the {\bf harmonic forms} of the cohomologies
of the product as the tensor product of the harmonic forms in the factors. In the context of graphs, we had
taken the Shannon product and in order to get the cohomologies of the product taken the tensor product and then 
applied the divergence $d^*$.

\begin{figure}[!htpb]
\scalebox{0.5}{\includegraphics{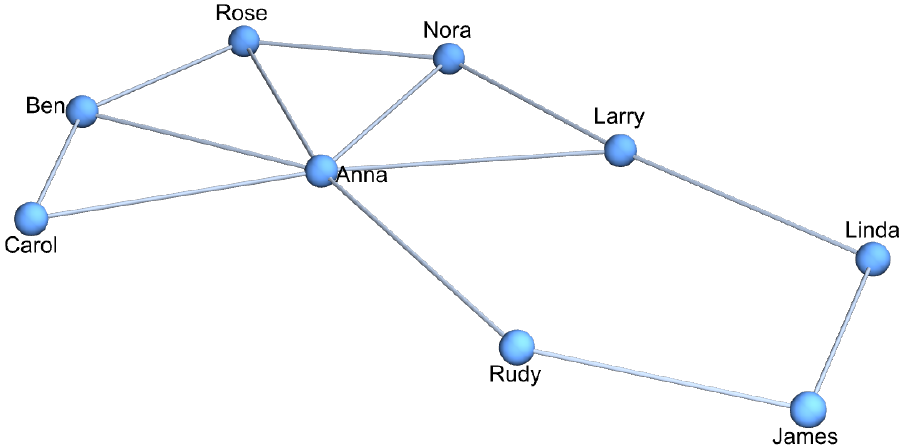}}
\scalebox{0.5}{\includegraphics{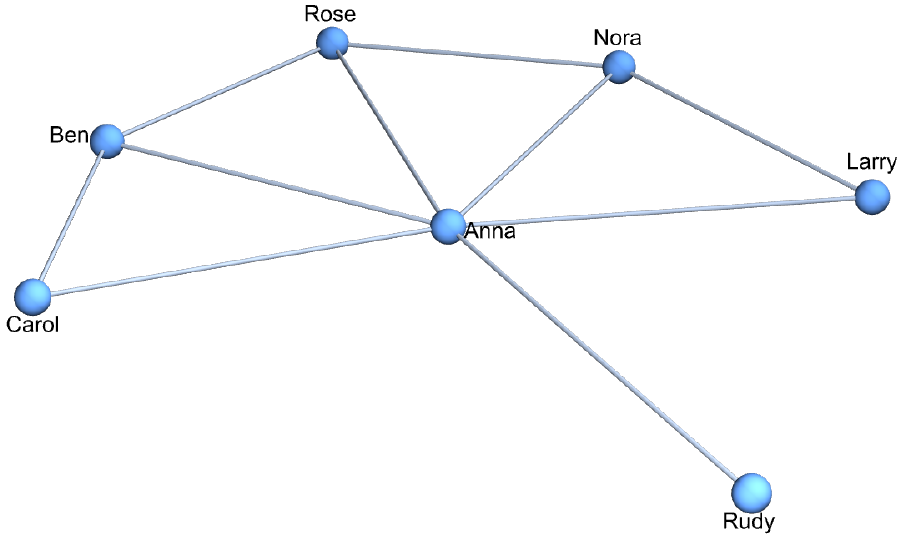}}
\scalebox{0.5}{\includegraphics{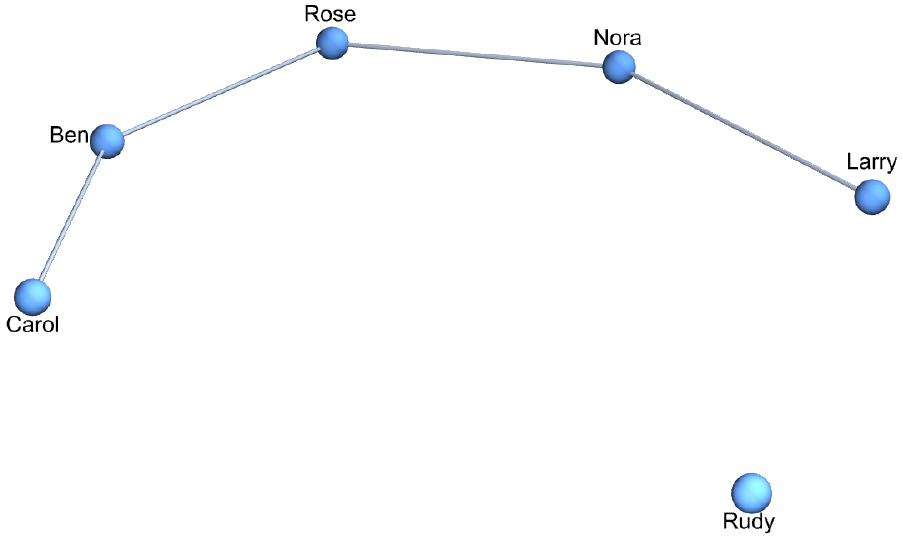}}
\caption{
This graph from the Wolfram example library illustrates a
small network of friends. It defines the simplicial complex $G$ of dimension $1$.
Given the $0$-dimensional point $x=\{ {\rm Anna} \}$ let 
$U(x) = \{ y, x \subset y\}$ its star. The unit ball $B(x)=\overline{U(x)}$ 
is given in the the middle and the unit sphere $S(x) = \delta U$ in the third picture.
All $G,B(x),S(x)$ are closed. The set $U(x)$ has f-vector $(1,6,4)$,
Euler characteristic $1-6+4=-1$ and Betti vector $b(U(x)) = (0,1,0)$.
We have $b(B(x))=(1,0,0)$ and $b(S(x))=(2,0,0)$.
So, $b(U(x)) + b(S(x))=b(B(x))+b(I)$ with $b(I)=(1,1,0)$.
When merging the star $U(x)$ with the unit sphere $S(x)$, a $0$-form
on $S(x)$ and a $1$-form on $U(x)$ have merged.  An other example:
to see that $b(G)=(1,1,0)$, just remove an open set $U=\{ \{ {\rm James,Linda} \} \}$
with $b(U)=(0,1,0)$ to get a contractible $K=G \setminus U$ with $b(K)=(1,0,0)$
so that $b(G)=b(K)+b(U)$. 
}
\end{figure}

\section*{Questions} 

\paragraph{}
A myriad of questions remain non-negative vector $b(I) = b(K)+b(U)-b(G)$. A sample: 
{\bf 1)} Can we characterize the cases where $b_k(I)=0$. What are the properties of the harmonic forms
         which are present in the disjoint union $(K,U)$ and which disappear when fusing the 
         two sets together to $G$?  \\
{\bf 2)} Which $K \subset G$ maximize the functional $\Pi(K) =||b(I)||$? 
         We explored this experimentally using Monte-Carlo simulated annealing (just flip simplices if
         possible from being open or closed when $\Pi$ gets larger).
         It appears as if the maximal $K$ do not look random, if $G$ is a manifold. \\
{\bf 3)} For which $K$ is $b(I)=0$  a minimum $\Pi(K)=0$? This is interesting 
         because having equality allows to compute Betti vectors of fused 
         objects much more easily. \\
{\bf 4)} How does the isospectral deformation of $d_K$ relate to the one of 
         $d_G$ (see \cite{IsospectralDirac2}). Isospectral deformations should be seen 
         as {\bf space symmetries}. They preserve the spectrum of $L$ but $D$ changes
         and so does the Connes distance.  \\
{\bf 5)} A homotopy deformation of $K$ does not change $b(I)$. What happens if $K$ is a submanifold
         of a manifold $G$. Can $b(I)$ produce interesting knot invariants?  \\
{\bf 6)} How do the $b(I)$ fluctuate when applying a Markov process on $K$? 
         The pair $(K,U)$ in $G$ defines a $\{0,1 \}$ valued function on $G$. 
         but we work here with topological constraints in that $K$ needs to be closed at all times. \\
{\bf 7)} Does the cohomology of a submanifold $K$ of a manifold $G$ have any significance in 
         classfying the embeedings? What happens in the co-dimension $2$ case of generalized knots. What
         happens in surgery cases, there the interface identification of $K$ and $U$ is given by 
         a map.  \\
{\bf 8)} Can we cut every orientable manifold $G$ into a closed set $K$ and an open set $U$ such 
         that $b(G)=b(U)+b(K)$ and the reversed sequence of $b(U)$ agrees with $b(K)$? If yes,
         this would provide a different angle to the {\bf Poincar\'e inequality}. \\
{\bf 9)} Is every pair of simplicial complexes $(G,K)$ equivalent to a pair $(G',K')$, where both $G',K'$
         are Whitney complexes of graphs? The answer is yes of course when allowing homotopic examples like
         if $G'$ is the Barycentric refinement of $G$ and $K'$ is the refinement of $K$. 
         For example $C_3$ which is not a Whitney complex is
         equivalent to $(C_4,K_2)$ as $C_4/K_2$ is an edge collapse reducing the cyclic graph with $4$ elements
         to a cyclic graph with $3$ elements without creating a 2-simplex.  \\
{\bf 10)} The fusion inequality can also be looked at for {\bf Wu cohomology} \cite{CohomologyWuCharacteristic}, 
          the cohomology belonging to Wu characteristic $\sum_{x \sim y \in G}  w(x) w(y)$ summing over 
          intersecting pairs of simplices in $G$.  We have observed now that 
          $w(U)+w(K)-w(G)$ can take positive or negative signs. Under which conditions do we
          have an inequality here?

\section*{Appendix: a detailed example}

\paragraph{}
In order to illustrate the set-up and show the difference to traditional frame-works,
let us look at a simple concrete example and write down all the sets and matrices. 
Unlike the usual definition of cohomology as a quotient of ``cocycles" over ``coboundaries",
the Hodge approach is {\bf finite} at every stage and does not require to consider equivalence 
classes in infinite sets. Computationally this is much simpler. All matrices are finite integer matrices. 
Row reduction keeps the matrices rational and an integer basis of the cohomology groups 
can be written down fast and reliably without ever invoking any limit and without ever invoking the 
real numbers. 

\paragraph{}
All associated objects like he set of sets $G$, or the matrices $L(U),L(K),L(G)$ or the 
integers $b(U),b(K),b(G)$ are obtained without invoking the infinity axiom. We do not use geometric 
realizations, but it should be obvious that the Betti vectors of {\bf closed sets}
correspond to the Betti numbers of geometric realizations as they represent 
objects for which Hodge cohomology agrees with the simplicial cohomology. 
Being at all times in a {\bf finitist setting} can be important, not
only in reverse mathematics considerations, or being a shelter in case our axiom systems containing the
infinity axiom system would turn out to be inconsistent but also more elegant from a 
{\bf computer science point of view}.

\paragraph{}
The non-Hausdorff property of the finite topology on a complex already should indicate
that the results for open sets differ from how the cohomology of open sets considered in the 
continuum. Already the simple example coming up indicates how different things are from the continuum
where the cohomology of an open ball is usually also considered to be trivial. In de-Rham set up for
example, every $k$-form $F$ is closed for $k \neq 0$ (for example: every vector field is a
gradient field by Stokes theorem) and a cocycle $0$-form is a constant function.

\paragraph{}
Our example is the discrete 2-sphere 
\begin{tiny}
$$G=\{\{1\},\{2\},\{3\},\{4\},\{1,2\},\{1,3\},\{1,4\},\{2,3\},\{2,4\},
 \{3,4\},\{1,2,3\},\{1,2,4\},\{1,3,4\},\{2,3,4\}\} $$
\end{tiny}
It is a finite set of sets which is closed under the operation of taking finite non-empty subsets
and so a finite abstract simplicial complex. 
By writing down the sets, we already make a choice of order, both
how each elements in a set are ordered and how the sets are ordered. 
It should be obvious that switching the permutation order of the set is realized 
by orthogonal permutation matrices in $\mathbb{R}^n$. Switching an order on 
the simplices will change the exterior derivative matrix $d$ and also be given 
by an orthogonal coordinate change. The basis of the topology has only 14 elements 
$\mathcal{B} = \{ U(x), x \in G \}$ because $G$ has $14$ elements. 
The topology $\mathcal{O}$ itself contains 470 sets. 

\paragraph{}
The closure $B(x) = \overline{U(x)}$ of a smallest open set $U(x)$ is the {\bf unit ball}. Its boundary 
$S(x) = B(x) \setminus U(x)$ is the called the {\bf unit sphere} of $x$. 
Each of these unit spheres is a cyclic complex with 3 or 4 elements and so a 1-sphere, 
a cylic complex. For example
\begin{tiny}
$U(\{1\}) = \{\{1\},\{1,2\},\{1,3\},\{1,4\},\{1,2,3\},\{1,2,4\},\{1,3,4\}\}$, \\
$B(\{1\}) = \{\{1\},\{2\},\{3\},\{4\},\{1,2\},\{1,3\},\{1,4\},\{2,3\},\{2,4\},\{3,4\},\{1,2,3\},\{1,2,4\},\{1,3,4\}\}$ \\
$S(\{1\}) = \{\{2\},\{3\},\{4\},\{2,3\},\{2,4\},\{3,4\}\}$.
\end{tiny}

\paragraph{}
Because $G \setminus U(x) = $ is contractible, $G$ is a 2-sphere. 
For example, $G \setminus U(\{1\}) = \{\{2\},\{3\},\{4\},\{2,3\},\{2,4\},\{3,4\},\{2,3,4\}\}$ is the
Whitney complex of a complete graph $K_3$. A cyclic complex is a $1$-sphere because
every unit sphere there consists of $2$ separated points and so is a $0$-sphere and taking a point away
from a $1$-sphere produces a contractible complex. Let now $K=B(x)$ with $x=\{ 1,2,3 \}$ and $U$ its
complement. Written out, this means
\begin{tiny}
$$ K=\{\{1\},\{2\},\{3\},\{1,2\},\{1,3\},\{2,3\},\{1,2,3\}\}, 
   U=\{\{4\},\{1,4\},\{2,4\},\{3,4\},\{1,2,4\},\{1,3,4\},\{2,3,4\}\} \; . $$
\end{tiny}
The matrices $d$ can be read off as the lower diagonal part of the Dirac matrix $D=d+d^*$. 
For $U$ and $K$ the Dirac matrices are 
\begin{tiny}
$$ D(U) = \left[ \begin{array}{ccccccc}
                   0 & 0 & 0 & -1 & -1 & 0 & 0 \\
                   0 & 0 & 0 & 1 & 0 & -1 & 0 \\
                   0 & 0 & 0 & 0 & 1 & 1 & 0 \\
                   -1 & 1 & 0 & 0 & 0 & 0 & 1 \\
                   -1 & 0 & 1 & 0 & 0 & 0 & -1 \\
                   0 & -1 & 1 & 0 & 0 & 0 & 1 \\
                   0 & 0 & 0 & 1 & -1 & 1 & 0 \\
                  \end{array} \right], 
   D(K) = \left[ \begin{array}{ccccccc}
                   0 & 1 & 1 & 1 & 0 & 0 & 0 \\
                   1 & 0 & 0 & 0 & -1 & -1 & 0 \\
                   1 & 0 & 0 & 0 & 1 & 0 & -1 \\
                   1 & 0 & 0 & 0 & 0 & 1 & 1 \\
                   0 & -1 & 1 & 0 & 0 & 0 & 0 \\
                   0 & -1 & 0 & 1 & 0 & 0 & 0 \\
                   0 & 0 & -1 & 1 & 0 & 0 & 0 \\
                  \end{array} \right] \; . $$
\end{tiny}
The square are the Hodge Laplacians:
\begin{tiny}
$$ 
   L(U) = \left[ \begin{array}{ccccccc}
                   3 & 0 & 0 & 0 & 0 & 0 & 0 \\
                   0 & 3 & 0 & 0 & 0 & 0 & 0 \\
                   0 & 0 & 3 & 0 & 0 & 0 & 0 \\
                   0 & 0 & 0 & 3 & 0 & 0 & 0 \\
                   0 & 0 & 0 & 0 & 2 & 1 & -1 \\
                   0 & 0 & 0 & 0 & 1 & 2 & 1 \\
                   0 & 0 & 0 & 0 & -1 & 1 & 2 \\
                  \end{array} \right]
   L(K) = \left[
                  \begin{array}{ccccccc}
                   2 & -1 & -1 & 0 & 0 & 0 & 0 \\
                   -1 & 2 & -1 & 0 & 0 & 0 & 0 \\
                   -1 & -1 & 2 & 0 & 0 & 0 & 0 \\
                   0 & 0 & 0 & 3 & 0 & 0 & 0 \\
                   0 & 0 & 0 & 0 & 3 & 0 & 0 \\
                   0 & 0 & 0 & 0 & 0 & 3 & 0 \\
                   0 & 0 & 0 & 0 & 0 & 0 & 3 \\
                  \end{array}
                  \right] \; . $$
\end{tiny} 
By looking at the kernels of the blocks (an $1 \times 1$, 
a $3 \times 3$ and a $3 \times 3$ block for $U$ and a
$3 \times 3$, a $3 \times 3$ and a $1 \times 1$ block for $K$),
we see that $b(U)=(0,0,1)$ and $b(K)=(1,0,0)$.  This is an example
where have a {\bf Poincar\'e-duality splitting}.

\paragraph{}
Finally, let us look at the Hodge matrix of the complex $G$ itself:

\begin{tiny}
$$ L(G) = \left[ \begin{array}{cccccccccccccc}
                   3 & -1 & -1 & -1 & 0 & 0 & 0 & 0 & 0 & 0 & 0 & 0 & 0 & 0 \\
                   -1 & 3 & -1 & -1 & 0 & 0 & 0 & 0 & 0 & 0 & 0 & 0 & 0 & 0 \\
                   -1 & -1 & 3 & -1 & 0 & 0 & 0 & 0 & 0 & 0 & 0 & 0 & 0 & 0 \\
                   -1 & -1 & -1 & 3 & 0 & 0 & 0 & 0 & 0 & 0 & 0 & 0 & 0 & 0 \\
                   0 & 0 & 0 & 0 & 4 & 0 & 0 & 0 & 0 & 0 & 0 & 0 & 0 & 0 \\
                   0 & 0 & 0 & 0 & 0 & 4 & 0 & 0 & 0 & 0 & 0 & 0 & 0 & 0 \\
                   0 & 0 & 0 & 0 & 0 & 0 & 4 & 0 & 0 & 0 & 0 & 0 & 0 & 0 \\
                   0 & 0 & 0 & 0 & 0 & 0 & 0 & 4 & 0 & 0 & 0 & 0 & 0 & 0 \\
                   0 & 0 & 0 & 0 & 0 & 0 & 0 & 0 & 4 & 0 & 0 & 0 & 0 & 0 \\
                   0 & 0 & 0 & 0 & 0 & 0 & 0 & 0 & 0 & 4 & 0 & 0 & 0 & 0 \\
                   0 & 0 & 0 & 0 & 0 & 0 & 0 & 0 & 0 & 0 & 3 & 1 & -1 & 1 \\
                   0 & 0 & 0 & 0 & 0 & 0 & 0 & 0 & 0 & 0 & 1 & 3 & 1 & -1 \\
                   0 & 0 & 0 & 0 & 0 & 0 & 0 & 0 & 0 & 0 & -1 & 1 & 3 & 1 \\
                   0 & 0 & 0 & 0 & 0 & 0 & 0 & 0 & 0 & 0 & 1 & -1 & 1 & 3 \\
                  \end{array} \right]  \; .  $$
\end{tiny}
We see three blocks, where the first and last both have a $1$-dimensional 
kernel and where the middle block does have a trivial kernel. The Betti vector is 
$b(G) = (1,0,1)$. In this example, we have $b(K) + b(U) = b(G)$. 

\paragraph{}
This happens in all dimensions: the $q$-sphere is the disjoint union of an 
open ball and a closed ball. The closed ball $K$ has $b(K)=(1,0,0, \dots,0)$, 
the open ball $U$ has $b(U) = (0, \dots , 0,1)$. The sphere $G$ has 
$b(G)=(1,0,\dots, 0,1)$. 

\section*{Appendix: Code}

\paragraph{}
We again include the self-contained Mathematica code. It can be copy-pasted 
from the LaTeX source code of this document on the ArXiv. This allows an interested reader to 
experiment and compute the Betti vectors for an arbitrary pair of complementary $K,U$ in 
a complex $G$. We also implemented the Cartesian product but did not yet shift the Betti
vector. 

\paragraph{}
The code to compute the cohomology of simplicial complex does not have to be changed get the cohomology of a
simplicial set. We just also need to pay attention to the dimension functional. In the
code below, we have already implemented that when computing the Dirac operator, we also include the dimension markers.
So, the data structure of a $\Delta$-set is given by $(G,D,r)$ where $r=(r_0,r_1, \dots, r_q)$ are the places
where the dimensions change. For example, if we look at $U=\{ \{ 1,2 \} \}$, then the Dirac matrix is the
$1 \times 1$ matrix $[0]$ and $r=(0,0,1)$ can be deduced from $U$. 
We have $b(U) = (0,1)$. Run $U=\{\{1,2\}\}$  $Dirac[U]$ and $Betti[U]$ to see this. 
In the case of the {\bf $1$-point compactification} $\dot{U} = \{ \{ 3 \}, \{1,2\} \}$ we have 
$D$ is the $2 \times 2$ matrix $ \left[ \begin{array}{cc} 0 & 0 \\ 0 & 0 \\ \end{array} \right]$ and $r=(0,1,2)$. 
The Betti vector is $b(\dot{U}) = (1,1)$ as it should be as the 1-point compactification of a 
single $k$-simplex is a $k$-sphere. 
As for the {\bf compactification} $G=\overline{U}=\{ \{ 1\},\{2\},\{1,2\} \}$, we have
the Dirac matrix $D=\left[ \begin{array}{ccc} 0 & 0 & -1 \\ 0 & 0 & 1 \\ -1 & 1 & 0 \\ \end{array} \right]$ and $r=(0,2,3)$
(as we have $f_0=2-0$ vertices and $f_1=3-2$ edges) and $b(G) = (1,0)$. You can get this by typing $Dirac[Cl[\{\{1,2\}\}]]$. 

\paragraph{}
Here is how one can compute the cohomology of a higher dimensional tori in an effective way. 
Start with the 1-torus $S=\dot{U}=\{ 3\},\{1,2\} \}$. Now look at $S \times S$ (implemented in our
code as $Shannon[S,S]$) which gives $S \times S = \{\{3,6\},\{1,2,6\},\{3,4,5\},\{1,2,4,5\}\}$ a
set of sets with 4 elements! The Dirac matrix $D$ is the $4 \times 4$ matrix $0$ and $r=(0,0,1,3,4)$ 
needs to be adapted to $(0,1,3,4)$ in order to adjust for the dimension shift as the zero dimensional
points now are represented as sets of cardinality $2$. The Betti vector without modification is $(0,1,2,1)$
but after adjustment becomes $(1,2,1)$. We can see this as dividing by $1$ because $1 \times 1/1$ is $1$
and multiplication by $1$ shifts the dimension function. Lets continue and look at 
\begin{tiny}
$S \times S \times S = \{\{3,6,9\},\{1,2,6,9\},\{3,4,5,9\},\{3,6,7,8\},\{1,2,4,5,9\},\{1,2,6,7,8\},
    \{3,4,5,7,8\},\{1,2,4,5,7,8\}\}$ 
\end{tiny}
a set of sets with $8$ elements representing a 3-dimensional torus. 
The Dirac matrix is a $8 \times 8$ matrix containing all zeros. The unadjusted dimension markers
are $r=(0,0,0,1,4,7,8)$. When adjusted (divide by $1 \times 1$) we get $r=(0,1,4,7,8)$ and Betti
vector $(1,3,3,1)$ as it should be for a $3$-torus. 

\paragraph{}
This is a dramatic computational advantage. What we have done previously using the Stanley-Reisner computation 
is to to produce a simplicial complex (the order complex) which has the effect that $G \times 1$ is the 
{\bf Barycentric refinement} of $G$. If we insist on simplicial complexes, then the smallest simplicial complex
representing a sphere is $G=\overline{\{\{1,2\},\{1,3\},\{2,3\}\}}$. The Stanley-Reisner picture product
$G \times_1 G$ (the Barycentric refinement of $G \times G$ as a $\Delta$ set) has now 216 elements and
the complex $G \times_1 G \times_1 G$ has 33696 elements. Computing the Betti vector requires to look at a 
$33696 \times 33696$ matrix.  An other example is to model the 4-manifold $G=S^2 \times S^2$ as $H \times H$
with $H=\{1,\{2,3,4\} \}$ which is the $1$-point compactification of $U=\{ \{1,2,3\} \}$. 
Now, $b(G) = (1,0,2,0,1)$ reflects the fact that we have besides the constant $0$
form and the volume form also two $2$-forms located on the two factors. This of course is just K\"unneth. In 
the present frame work K\"unneth is almost trivial as we just take the tensor product $f \star g(x,y) = f(x) g(y)$
of two harmonic forms $f$ on $A$ and $g$ on $G$ to get the harmonic form $f \star g$ on $A \times B$. Already 
Whitney had been battling with the fact that the product of a $k$ simplex and $l$ simples is a $k+l+1$ simplex
rather than a $k+l$ simplex. The concept of $\Delta$ sets (or allowing to shift the dimension function makes
this problem go away. We had previously dealt with this by applying the divergence to the naive tensor product
of a $p$ form and $q$ form in order to get a $p+q$-form. 

\vfill 
\pagebreak

\begin{tiny}
\lstset{language=Mathematica} \lstset{frameround=fttt}
\begin{lstlisting}[frame=single]
F[G_]:=Module[{l=Map[Length,G]},If[G=={},{},
 Table[Sum[If[l[[j]]==k,1,0],{j,Length[l]}],{k,Max[l]}]]]; s[x_]:=Signature[x];L=Length;
s[x_,y_]:=If[SubsetQ[x,y]&&(L[x]==L[y]+1),s[Prepend[y,Complement[x,y][[1]]]]*s[x],0];
Dirac[G_]:=Module[{f=F[G],b,d,n=Length[G]},b=Prepend[Table[Sum[f[[l]],{l,k}],{k,Length[f]}],0];
 d=Table[s[G[[i]],G[[j]]],{i,n},{j,n}]; {d+Transpose[d],b}];
Hodge[G_]:=Module[{Q,b,H},{Q,b}=Dirac[G];H=Q.Q;Table[Table[H[[b[[k]]+i,b[[k]]+j]],
 {i,b[[k+1]]-b[[k]]},{j,b[[k+1]]-b[[k]]}],{k,Length[b]-1}]];       Betti[G_]:=Map[nu,Hodge[G]];
Closure[A_]:=If[A=={},{},Delete[Union[Sort[Flatten[Map[Subsets,A],1]]],1]];
Whitney[s_]:=If[Length[EdgeList[s]]==0,Map[{#}&,VertexList[s]], Map[Sort,Sort[Closure[
  FindClique[s,Infinity,All]]]]];                    nu[A_]:=If[A=={},0,Length[NullSpace[A]]];
Shannon[A_,B_]:=Module[{q=Max[Flatten[A]],Q,G={}}, Q=Table[B[[k]]+q,{k,Length[B]}];
  Do[G=Append[G,Sort[Union[A[[a]],Q[[b]]]]],{a,Length[A]},{b,Length[Q]}]; Sort[G]];
OpenStar[G_,x_]:=Module[{U={}},Do[If[SubsetQ[G[[k]],x],U=Append[U,G[[k]]]],{k,Length[G]}];U];
Basis[G_]:=Table[OpenStar[G,G[[k]]],{k,Length[G]}]; Stars=Basis;
RandomOpenSet[G_,k_]:=Module[{A=RandomChoice[Basis[G],k],U={}},Do[U=Union[U,A[[j]]],{j,k}];U];
Betti[G_,U_,K_]:={"b_G=",Betti[G],"b_U=",Betti[U],"b_K=",Betti[K]};
G=Whitney[RandomGraph[{20,50}]]; U=RandomOpenSet[G,10];K=Complement[G,U]; Print[Betti[G,U,K]]; 
S={{1},{2,3}}; G=Shannon[Shannon[S,S],Shannon[S,S]]; Print["Betti(T^4) unshifted",Betti[G] ]
S2={{1},{2,3,4}}; G=Shannon[S2,S2]; Print["Betti S^2 x S^2 (not yet shifted)", Betti[G]] 
\end{lstlisting}
\end{tiny}

\section*{Appendix: Spectral partial order}

\paragraph{}
Define a {\bf partial order on finite sequences} as $x \leq y$ if when ordered and left padded,
one has $x_k \leq y_k$. For example $(1,2,4) \leq (0,0,2,3,5)$ or $(-1,2,3,5) \leq (4,6)$. 
Two sequence which are considered {\bf the isomorphic} if when padded left they are the same. 
Therefore,  $x \leq y$ and $y \leq x$ implies $x=y$. We obviously have $x \leq x$. Finally, there is transitivity
if $x \leq y$ and $y \leq z$, then $x \leq z$. We indeed have a partial order. 

\paragraph{}
Having a partial order on finite sequences, we 
also get a partial order on equivalence classes of symmetric matrices. Given
two matrices $A,B$, which do not necessarily have to be the same size.
We say $A \leq B$ if the eigenvalue sets satisfy $\sigma(A) \leq \sigma(B)$ in 
the partial order of finite sequences. 
Also here, if {\bf similar padded matrices} are identified, then we have a partial 
order on the space of equivalence classes of symmetric matrices. For example, if $A,B$ 
have the same sign and satisfy $A \leq_L B$ in the Loewner order, then $A \leq B$. The 
reverse is not true. The spectral partial order on symmetric matrices
does not require the matrices to have the same size. Let us rephrase what we have
proven for the Laplacians $L_K,L_G$ of two finite abstract simplicial complexes $K,G$
\cite{Eigenvaluebounds}:

\begin{coro} If $K \subset G$ then $L_K \leq L_G$. \end{coro}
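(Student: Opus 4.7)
The plan is to verify the shifted eigenvalue inequality
$$\lambda_j(L_K) \leq \lambda_{j+(|G|-|K|)}(L_G), \qquad j=1,\dots,|K|,$$
which, together with $\lambda_i(L_G)\geq 0$ for $i\leq |G|-|K|$ (trivial since $L_G$ is positive semidefinite), is precisely what the spectral partial order with left-padding by zeros demands. The key tool is Courant-Fischer combined with the canonical isometric embedding
$$\iota:\mathbb{R}^{|K|}\hookrightarrow\mathbb{R}^{|G|}, \qquad f\mapsto\tilde f,$$
extending a form on $K$ by zero on $G\setminus K$. Because the embedding preserves the grading by $\dim x$, the argument runs either globally on $L=\bigoplus_p L_p$ or blockwise on each $L_p$, so one obtains the inequality for the full Dirac Laplacian and for each form-degree separately.

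First I would prove the energy inequality $\langle\tilde f,L_G\tilde f\rangle\geq\langle f,L_K f\rangle$ for every $f$. Writing $L=d^*d+dd^*$ this splits into comparing $|d\tilde f|^2$ with $|d_K f|^2$ and $|d^*\tilde f|^2$ with $|d_K^* f|^2$. For the codifferential, if $x\in K$ then any $y\supset x$ in $G$ with $\tilde f(y)\neq 0$ forces $y\in K$, so $d_G^*\tilde f(x)=d_K^* f(x)$; if $x\in G\setminus K$ then the closedness of $K$ as a subcomplex rules out $y\supset x$ with $y\in K$, so $d_G^*\tilde f(x)=0$. Hence $|d_G^*\tilde f|^2=|d_K^* f|^2$. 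For the exterior derivative, $d_G\tilde f$ restricted to $K$ coincides with $d_K f$, but $d_G\tilde f$ may pick up extra nonzero contributions on $(p{+}1)$-simplices in $G\setminus K$ whose $p$-faces lie in $K$, giving $|d_G\tilde f|^2\geq|d_K f|^2$. Summing proves the claim.

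Second, I would apply Courant-Fischer in max-min form,
$$\lambda_k(A)=\max_{\substack{W\subset\mathbb{R}^N\\ \dim W=N-k+1}}\;\min_{\substack{v\in W\\ |v|=1}}\langle v,Av\rangle.$$
Pick a maximizing $W^*\subset\mathbb{R}^{|K|}$ of codimension $j-1$ for $L_K$, so $\lambda_j(L_K)=\min_{v\in W^*,|v|=1}\langle v,L_K v\rangle$. Its image $\iota(W^*)\subset\mathbb{R}^{|G|}$ has the same dimension $|K|-j+1=|G|-(j+|G|-|K|)+1$, hence is admissible in the formula for $\lambda_{j+|G|-|K|}(L_G)$. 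Combined with the energy inequality and the isometry of $\iota$,
$$\lambda_{j+|G|-|K|}(L_G)\geq\min_{v\in W^*,|v|=1}\langle \iota(v),L_G\iota(v)\rangle\geq\min_{v\in W^*,|v|=1}\langle v,L_K v\rangle=\lambda_j(L_K).$$

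The main obstacle is the identity $|d_G^*\tilde f|^2=|d_K^* f|^2$: the closedness of $K$ is indispensable here. If $K$ were merely an arbitrary subset and not a subcomplex, $d_G^*\tilde f$ could acquire nonzero entries outside of $K$, breaking the clean energy comparison; indeed for open sets $U$ one only obtains the weaker one-sided bound used in Theorem~\ref{1}. Once the energy inequality is established, the spectral inequality is a direct consequence of the max-min principle, with the dimension shift absorbed exactly by the left-padding convention.
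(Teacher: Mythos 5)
Your proof is correct and follows essentially the same route as the paper, which presents this corollary as a rephrasing of the spectral monotonicity result of \cite{Eigenvaluebounds}, proved there via the Courant--Fischer formula. Your zero-extension energy estimate (using the closedness of $K$ to control the codifferential term) combined with the max--min characterization, with the index shift absorbed exactly by the left-padding convention, is the intended argument; the paper itself only cites the earlier reference rather than reproducing these details.
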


\paragraph{}
For two finite sequences $x,y$, define
$x+y$ as the sum of the left padded versions. For example 
$(1,2,3) + (3,7,8,9)= (0,1,2,3) + (3,7,8,9)=(3,8,10,12)$. 
Define $x \oplus y$ as the ordered sequence obtained from the union $x \cup y$. 
For example, $(1,2,3) \oplus (3,7,8,9) = (1,2,3,3,7,8,9)$. 
On the set of all square matrices, define $A+B$ as the sum of left upper padded matrices
and $A \oplus B$ as the direct sum, a matrix which has $A,B$ as blocks. 
The spectra satisfy $\sigma(A+B) \leq \sigma(A) + \sigma(B)$ and
$\sigma(A \oplus B) = \sigma(A) \oplus \sigma(B)$. 

\begin{propo}
Here are some obvious properties for finite ordered sequences $x,y,u,v$ \\
a) If $x \leq y$ and $u \leq v$, then $x+u \leq y+v$.  \\
b) If $x \leq y$ and $u \leq v$, then $x\oplus u \leq y \oplus v$.  \\
c) $x + y \leq 2( x \oplus y)$. \\ 
And the analog for symmetric matrices $A,B,C,D$  \\
d) If $A \leq C$ and $B \leq D$, then $A+B \leq C+D$. \\
e) If $A \leq C$ and $B \leq D$, then $A \oplus B \leq C \oplus D$. \\
f) $A + B \leq 2(A \oplus  B)$.  \\
\end{propo}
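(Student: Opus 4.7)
The plan is to prove the three sequence inequalities (a), (b), (c) first, and then obtain the matrix statements (d), (e), (f) by applying them to the ordered eigenvalue vectors $\sigma(A), \sigma(B), \sigma(C), \sigma(D)$. After padding $x,y$ (respectively $u,v$) to a common length with zeros on the left so that componentwise comparison is well defined, (a) is immediate: termwise addition of $x_k \leq y_k$ and $u_k \leq v_k$ gives $(x+u)_k \leq (y+v)_k$. For (b) I would use the tail-counting description of a sorted union: for any threshold $c$, $(x \oplus u)_k \leq c$ is equivalent to $\#\{i : x_i \leq c\} + \#\{j : u_j \leq c\} \geq k$. Since $x \leq y$ forces any index with $y_i \leq c$ to satisfy $x_i \leq c$, and similarly for $u \leq v$, the predicate for $(y \oplus v)_k \leq c$ implies the one for $(x \oplus u)_k \leq c$; taking $c = (y \oplus v)_k$ gives the desired componentwise inequality.

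Item (c) carries the main combinatorial content. After padding $x, y$ to common length $n$, the sum $x+y$ has length $n$ while the sorted union $x \oplus y$ has length $2n$. Left-padding $x+y$ with $n$ zeros reduces the claim to $x_j + y_j \leq 2(x \oplus y)_{n+j}$ for $j = 1, \dots, n$. The key estimate is $(x \oplus y)_{n+j} \geq \max(x_j, y_j)$: at most $j-1$ entries of $x$ and at most $n$ entries of $y$ can be strictly smaller than $x_j$, summing to $n+j-1 < n+j$, so the $(n+j)$-th smallest entry of $x \oplus y$ cannot lie below $x_j$; the symmetric argument handles $y_j$. Combined with the elementary inequality $2\max(a,b) \geq a+b$, this gives (c).

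For (d), (e), (f) I would invoke the block identity $\sigma(A \oplus B) = \sigma(A) \oplus \sigma(B)$, transparent from the block-diagonal structure, and read the operations $+$ and $\oplus$ on symmetric matrices at the spectral level, so that (d), (e), (f) are direct transcriptions of (a), (b), (c) for the eigenvalue sequences. The main obstacle is conceptual rather than computational: for honest matrix addition $\sigma(A+B)$ is not a function of $\sigma(A)$ and $\sigma(B)$ alone, and a naive reading of ``$A+B \leq C+D$'' as an inequality between spectra of matrix sums can fail. One must therefore be explicit that in (d) and (f) the symbol $+$ is interpreted through the spectra, the convention aligned with the block-diagonal setup $L_{K,U} = L_K \oplus L_U$ used in Theorem~\ref{1}.
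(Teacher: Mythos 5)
The paper offers no argument for this proposition at all --- it is announced as a list of ``obvious properties'' and left unproved --- so your write-up is not a variant of the paper's proof but a replacement for a missing one, and it is essentially correct. Parts (a) and (b) are fine; the tail-counting characterization of the sorted union in (b) and the estimate $(x\oplus y)_{n+j}\ge\max(x_j,y_j)$ in (c) are exactly the right lemmas, and the latter is the only step with real content. Two remarks. First, your proof of (c) only verifies the last $n$ components of the padded comparison; the first $n$ components require $0\le 2(x\oplus y)_i$, so (c) --- and indeed the whole left-padding convention --- tacitly assumes nonnegative entries (the paper's own example $(-1,2,3,5)\le(4,6)$ already fails the padded componentwise test). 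This is harmless in the intended application, where all sequences are spectra of positive semidefinite Laplacians, but it should be said. Second, your caution about (d) and (f) is not merely conceptual hygiene: with the paper's literal definition of $A+B$ as the sum of padded matrices, (d) is false. Take $A=B=C=\mathrm{diag}(0,1)$ and $D=\mathrm{diag}(1,0)$; then $A\le C$ and $B\le D$ (identical spectra), yet $\sigma(A+B)=(0,2)\not\le(1,1)=\sigma(C+D)$. So the matrix statements (d) and (f) are only correct under the spectral reading of $+$ that you adopt, i.e.\ as restatements of (a) and (c) for eigenvalue sequences; this is the form actually used in the proof of Theorem~\ref{1}, where the comparison runs through $\lambda_j(L_K)+\lambda_{k-j}(L_U)$ rather than through the spectrum of a genuine matrix sum. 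Your proposal both supplies the missing proofs and correctly isolates the one place where the proposition, as literally stated, needs reinterpretation.
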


\bibliographystyle{plain}

\begin{thebibliography}{10}

\bibitem{Alexandroff1937}
P.~Alexandroff.
\newblock Diskrete {R\"aume}.
\newblock {\em Mat. Sb. 2}, 2, 1937.

\bibitem{alexandroff}
P.S. Alexandrov.
\newblock {\em Combinatorial topology}.
\newblock Dover books on Mathematics. Dover Publications, Inc, 1956.
\newblock Three volumes bound as one.

\bibitem{Cycon}
H.L. Cycon, R.G.Froese, W.Kirsch, and B.Simon.
\newblock {\em {Schr\"odinger} Operators---with Application to Quantum
  Mechanics and Global Geometry}.
\newblock Springer-Verlag, 1987.

\bibitem{DehnHeegaard}
M.~Dehn and P.~Heegaard.
\newblock Analysis situs.
\newblock {\em Enzyklopaedie d. Math. Wiss}, III.1.1:153--220, 1907.

\bibitem{Dieudonne1989}
J.~Dieudonne.
\newblock {\em A History of Algebraic and Differential Topology, 1900-1960}.
\newblock Birkh\"auser, 1989.

\bibitem{Dirac1928}
P.A.M. Dirac.
\newblock The quantum theory of the electron.
\newblock {\em Proceedings of the Royal Society of London, Series A},
  117:610--624, 1928.

\bibitem{Dodziuk}
J.~Dodziuk.
\newblock Finite-difference approach to the hodge theory of harmonic forms.
\newblock {\em Amer. J. Math.}, 98:79--104, 1976.

\bibitem{Eckmann1944}
B.~Eckmann.
\newblock {Harmonische Funktionen und Randwertaufgaben in einem Komplex}.
\newblock {\em Comment. Math. Helv.}, 17(1):240--255, 1944.

\bibitem{EilenbergMaclane}
S.~Eilenberg and S.~MacLane.
\newblock Group extensions and homology.
\newblock {\em Annals of Mathematics}, 43 (4):757--831, 1942.

\bibitem{EilenbergMacLane1945}
S.~Eilenberg and S.~MacLane.
\newblock General theory of natural equivalences.
\newblock {\em Transactions of the AMS}, 58:231--294, 1945.

\bibitem{EilenbergSteenrod}
S.~Eilenberg and N.E. Steenrod.
\newblock Axiomatic approach to homological algebra.
\newblock {\em Proc Natl Acad Sci U S A}, 31, 1945.

\bibitem{Friedman1998}
J.~Friedman.
\newblock Computing betti numbers via combinatorial laplacians.
\newblock {\em Algorithmica}, 21:386--391, 1998.

\bibitem{Hatcher}
A.~Hatcher.
\newblock {\em Algebraic Topology}.
\newblock Cambridge University Press, 2002.

\bibitem{HausdorffIII}
F.~Hausdorff, U.~Felgner, M.~Husek, V.~Kanovei, P.~Koepke, G.~Preuss,
  W.~Purkert, and E.~Scholz.
\newblock In {\em Felix Hausdorff, Gesammelte Werke III}. Springer, 2008.

\bibitem{Hodge1933}
W.F.D. Hodge.
\newblock Harmonic functionals in a riemannian manifold.
\newblock {\em Proc. London Math. Soc}, 36:257--303, 1933.

\bibitem{DanijelaJost2}
D.~Horak and J.~Jost.
\newblock Interlacing inequalities for eigenvalues of discrete laplace
  operators.
\newblock {\em Annals of Global Analysis and Geometry volume}, 43:177--207,
  2013.

\bibitem{DanijelaJost}
D.~Horak and J.~Jost.
\newblock Spectra of combinatorial {L}aplace operators on simplicial complexes.
\newblock {\em Adv. Math.}, 244:303--336, 2013.

\bibitem{Kirc}
G.~Kirchhoff.
\newblock {\"Uber die Aufl\"osung der Gleichungen auf welche man bei der
  Untersuchung der linearen Verteilung galvanischer Str\"ome gef\"uhrt wird}.
\newblock {\em Ann. Phys. Chem.}, 72:497--508, 1847.

\bibitem{KnillILAS}
O.~Knill.
\newblock The {D}irac operator of a graph.
\newblock {{\\}http://arxiv.org/abs/1306.2166}, 2013.

\bibitem{IsospectralDirac2}
O.~Knill.
\newblock An integrable evolution equation in geometry.
\newblock {{\\} http://arxiv.org/abs/1306.0060}, 2013.

\bibitem{AmazingWorld}
O.~Knill.
\newblock The amazing world of simplicial complexes.
\newblock {{\\}https://arxiv.org/abs/1804.08211}, 2018.

\bibitem{CohomologyWuCharacteristic}
O.~Knill.
\newblock The cohomology for {W}u characteristics.
\newblock {\\}https://arxiv.org/abs/1803.1803.067884, 2018.

\bibitem{Eigenvaluebounds}
O.~Knill.
\newblock Eigenvalue bounds of the {Kirchhoff Laplacian}.
\newblock https://arxiv.org/abs/2205.10968, 2022.

\bibitem{KnillTopology2023}
O.~Knill.
\newblock Finite topologies for finite geometries.
\newblock {{\\}http://arxiv.org/abs/2301.03156}, 2023.

\bibitem{HodgeInequality}
O.~Knill.
\newblock Spectral monotonicity of the {H}odge {L}aplacian.
\newblock https://arxiv.org/abs/2304.00901, 2023.

\bibitem{Knuth2011}
D.~E. Knuth.
\newblock {\em The Art of Computer Programming}, volume Volume 4A.
\newblock Addison-Wesley, 2011.

\bibitem{Kozlov}
D.~Kozlov.
\newblock {\em Combinatorial Algebraic Topology}, volume~21 of {\em Algorithms
  and Computation in Mathematics}.
\newblock Springer Verlag, 2008.

\bibitem{LeopardiStern}
P.~Leopardi and A.~Stern.
\newblock The abstract {Hodge--Dirac} operator and its stable discretization.
\newblock {\em SIAM Journal on Numerical Analysis}, 54(6):3258--3279, 2016.

\bibitem{Lovasz1993}
L.~Lovasz.
\newblock {\em Combinatorial Problems and Exercises}.
\newblock North-Holland, second edition, 1993.

\bibitem{DNA2006}
N.~Peinecke M.~Reuter, F-E.~Wolter.
\newblock Laplace-beltrami spectra as shape-dna of surfaces and solids.
\newblock {\em Computer-Aided Design}, 38:342--366, 2006.

\bibitem{MunkresAlgebraicTopology}
J.R. Munkres.
\newblock {\em Elements of Algebraic Topology}.
\newblock Addison-Wesley, 1984.

\bibitem{Pont1974}
J-L. Pont.
\newblock {\em {La topologie alg\'ebrique : des origines \`a Poincar\'e}}.
\newblock Presses universitaires de France (Paris), 1974.

\bibitem{AndersonJamesTrapp}
G.E.~Trapp W.N.~Anderson, H.E.James.
\newblock Eigenvalues of the difference and product of projections.
\newblock {\em Linear and Multilinear Algebra}, 17:295--299, 1985.

\end{thebibliography}

\end{document}